\begin{document}

\newcommand{\defi}{\stackrel{\Delta}{=}}
\newcommand{\A}{{\cal A}}
\newcommand{\B}{{\cal B}}
\newcommand{\U}{{\cal U}}
\newcommand{\G}{{\cal G}}
\newcommand{\cZ}{{\cal Z}}
\newcommand\one{\hbox{1\kern-2.4pt l }}
\newcommand{\Item}{\refstepcounter{Ictr}\item[\left(\theIctr\right)]}
\newcommand{\QQ}{\hphantom{MMMMMMM}}

\newtheorem{Theorem}{Theorem}[section]
\newtheorem{Lemma}{Lemma}[section]
\newtheorem{Corollary}{Corollary}[section]
\newtheorem{Remark}{Remark}[section]
\newtheorem{Example}{Example}[section]
\newtheorem{Proposition}{Proposition}[section]
\newtheorem{Property}{Property}[section]
\newtheorem{Assumption}{Assumption}[section]
\newtheorem{Definition}{Definition}[section]
\newtheorem{Construction}{Construction}[section]
\newtheorem{Condition}{Condition}[section]
\newtheorem{Exa}[Theorem]{Example}
\newcounter{claim_nb}[Theorem]
\setcounter{claim_nb}{0}
\newtheorem{claim}[claim_nb]{Claim}
\newenvironment{cproof}
{\begin{proof}
 [Proof.]
 \vspace{-3.2\parsep}}
{\renewcommand{\qed}{\hfill $\Diamond$} \end{proof}}
\newcommand{\erhao}{\fontsize{21pt}{\baselineskip}\selectfont}
\newcommand{\xiaoerhao}{\fontsize{18pt}{\baselineskip}\selectfont}
\newcommand{\sanhao}{\fontsize{15.75pt}{\baselineskip}\selectfont}
\newcommand{\sihao}{\fontsize{14pt}{\baselineskip}\selectfont}
\newcommand{\xiaosihao}{\fontsize{12pt}{\baselineskip}\selectfont}
\newcommand{\wuhao}{\fontsize{10.5pt}{\baselineskip}\selectfont}
\newcommand{\xiaowuhao}{\fontsize{9pt}{\baselineskip}\selectfont}
\newcommand{\liuhao}{\fontsize{7.875pt}{\baselineskip}\selectfont}
\newcommand{\qihao}{\fontsize{5.25pt}{\baselineskip}\selectfont}
\newcounter{Ictr}
\renewcommand{\theequation}{
\arabic{equation}}
\renewcommand{\thefootnote}{\fnsymbol{footnote}}

\def\A{\mathcal{A}}

\def\C{\mathcal{C}}

\def\V{\mathcal{V}}

\def\I{\mathcal{I}}

\def\Y{\mathcal{Y}}

\def\X{\mathcal{X}}

\def\J{\mathcal{J}}

\def\Q{\mathcal{Q}}

\def\W{\mathcal{W}}

\def\S{\mathcal{S}}

\def\T{\mathcal{T}}

\def\L{\mathcal{L}}

\def\M{\mathcal{M}}

\def\N{\mathcal{N}}
\def\R{\mathbb{R}}
\def\H{\mathbb{H}}

\title{}
\author{}
\begin{center}
\topskip2cm
\LARGE{\bf Safe subspace screening for the adaptive nuclear norm regularized trace regression}
\end{center}

\begin{center}
\renewcommand{\thefootnote}{\fnsymbol{footnote}}Pan Shang,  Lingchen Kong \footnote{Address: Pan Shang  and Lingchen Kong are with the School of Mathematics and Statistics, Beijing Jiaotong University, Beijing, 100044, China.\\
E-mail: pshang@amss.ac.cn,   konglchen@126.com }\\
\today
\end{center}
\vskip4pt
\textbf{Abstract:} Matrix form data sets arise in many areas, so there are lots of works about the matrix regression models. One special model of these models  is the adaptive nuclear norm regularized trace regression, which has been proven have good statistical performances. In order to accelerate the computation of this model, we consider the technique called screening rule. According to matrix decomposition and optimal condition of the model, we develop a safe subspace screening rule that can be used to identify inactive subspace of the solution decomposition and reduce the dimension of the solution. To evaluate the efficiency of the safe subspace screening rule, we embed this result into the alternating direction method of multipliers algorithm under a sequence of the tuning parameters. Under this process, each solution under the tuning parameter provides a matrix decomposition space. Then, the safe subspace screening rule is applied to eliminate inactive subspace, reduce the solution dimension and accelerate the computation process. Some numerical experiments are implemented on simulation data sets and real data sets, which illustrate the efficiency of our screening rule.
\vskip1pt
\noindent \emph{Keywords:} Adaptive nuclear norm; trace regression; Screening rule; Optimal condition
\section{Introduction}
To deal with matrix form data sets, there recently have some works about the matrix regression models, including multivariate linear regression, trace regression and so on. See e.g., \cite{B08, E18, H22, K11,  M21, L10, L12, N11, Y07, Z14, Z20, Z22}. Unlike the classical linear regression models, the linear coefficient of matrix regression models are in matrix form.  Among these models, there is a special model named the adaptive nuclear norm regularized trace regression \cite{B08}. This model can be treated as a matrix form of the adaptive lasso in Zou \cite{Z06} and is proved consistent to solution rank. Due to this advantage, we consider this model in the paper. Note that this model is a convex problem and can be solved by some existing solvers. The matrix dimension and singular value decomposition is a main concern, therefore it may consume more computational cost  during the process of solving this model.

Recently, there is a computational technique called screening rule, that tends to eliminate inactive features or samples in high-dimensional data sets and accelerate the calculation process. According to our best of knowledge, there are two types screening rules: heuristic screening rule and safe screening rule. We focus on safe screening rules in this paper. The safe screening rule was proposed by Ghaoui et al. in \cite{G12}, where they use the duality theorem of convex problems to identify inactive features and inactive sample in sparse linear regression models and classification problems, respectively. Then, there are lots of safe screening rules proposed, see, e.g.,  \cite{E23,D21,K23,L14,L17,N17,R17,P22,S22,S16,W15,W15b,W19,X16,Z17}, which include results for sparse linear regression models and sparse classification models. Due to the fact we consider a regression model in this paper, we review related works as follows. \cite{L14,W15,X16,L17,R17} build up safe feature screening rules for the famous lasso via different optimization techniques. \cite{W15b} considers the fused lasso only with fused term under the help of variational inequality. \cite{N17,D21} applies the duality gap to set up the screening rules for group lasso. \cite{W19} uses the result of conjugate function to build up the screening rule for sparse group that is the actually first work about the model with two regularizer. \cite{S22} proposes the dual circumscribed sphere technique and set up the screening rule for sparse quantile regression. However, as far as we know, there are a few screening rules for matrix regression models. \cite{WY15} considers the multivariate linear regression with $\ell_{2,1}$ norm as regularizer, and builds up the inactive column screening rule with the help of the optimal condition. \cite{ZZ15} focuses on the multivariate linear regression with nuclear norm as regularizer, and transfers the matrix as sum of rank one matrixes to identify inactive subspace. \cite{SK21} builds up the screening rule for trace regression models with nuclear norm regularizer and a general convex loss function, which identifies inactive singular values of the solution and estimate the upper bound of the solution rank. To characterize the low rank property of the matrix data, comparing to $\ell_{2,1}$ norm, nuclear norm is a direct choice because that it is the convex relaxation of the rank function. In addition, the inactive singular value identification screening rule can not reduce the computation cost.

In this paper, we build up the safe subspace screening rule for adaptive nuclear norm regularized trace regression. This screening rule can deal with the difficulty of the nuclear norm and reduce the computation of this model via eliminating inactive subspaces and reducing dimension of the solution. We apply the matrix decomposition in  \cite{ZZ15} to transfer the solution as sum of rank one matrixes. By analyzing the dual form and Karush-Kuhn-Tucker (KKT) condition of the model, we identify the rank one matrixes with zero coefficient. The zero coefficient means that the corresponding subspace is not involved in the solution decomposition, which is the core purpose of our screening rule. This screening rule has a closed-form formulation and can be implemented with low computational cost. To solve the adaptive nuclear norm regularized trace regression, we present the detailed process of alternating direction method of multipliers (ADMM). By embedding the proposed safe subspace screening rule into ADMM under a sequence of tuning parameters, as showed in Algorithm 2, we can solve the solution path of the adaptive nuclear norm regularized trace regression with low computational cost. We verify the efficiency of the proposed screening rule on some data sets, which clarifies that our result can reduce the calculation time.

Our paper is organized as follows. We present notations in the next paragraph. In Section 2, we introduce the adaptive nuclear norm regularized trace regression model, and review related definitions and results. Section 3 presents the detailed process of building up the safe subspace screening rule with its implementation algorithm. In Section 4, numerical experiments on some data sets are showed. Conclusion remark is presented in Section 5.

\emph{Notations}:
Let $M\in\mathbb{R}^{p\times q}$ be any matrix. The notation vec$(M)$ denotes the vector in $\mathbb{R}^{pq}$ obtained by stacking its columns into a vector. For $j\in\{1,2,\cdots,p\}$ and  $k\in\{1,2,\cdots,q\}$, $M_{:,j}$ means the $j_{th}$ column  and $M_{k,:}$ means the $k_{th}$ row of $M$. Suppose $M$ has a singular value decomposition with nonincreasing singular values $\sigma_{1}(M)\geq \cdots \geq \sigma_{r}(M)>0$ and $r\leq\rm{min\left\{p, q\right\}}$ is the rank of $M$. There are some related  norms with singular values of $M$. The Frobenius norm $\|\cdot\|_{F}$ is defined as $\|M\|_{F}=\sqrt{\sum_{i=1}^{p}\sum _{j=1}^{q}M_{i,j}^{2}}=\sqrt{\sigma^{2}_{1}(M)+\cdots+\sigma^{2}_{r}(M)}$. The nuclear norm $\|\cdot\|_{*}$ is  the sum of non-zero singular values, i.e., $\|M\|_{*}=\sum_{i=1}^{r}{\sigma_{i}(M)}$. The spectral norm $\|\cdot\|_{2}$ is the largest singular value, i.e., $\|M\|_{2}=\sigma_{1}(M)$.  A symmetric matrix $M\in\mathbb{R}^{p\times p}$ is called positive semidefinite (positive definite), denoted as $M\succeq0(M\succ0)$, if $\textbf{\textit{x}}^{\top}M\textbf{\textit{x}}\geq0$ $ (\textbf{\textit{x}}^{\top}M\textbf{\textit{x}}>0)$ holds for any $0\neq\textbf{\textit{x}}\in\mathbb{R}^{p}$. Any $M\in\mathbb{O}^{p}$ means that $M\in\mathbb{R}^{p\times p}$ and $M^{\top}M=I_{p}$. For any vector $\textbf{\textit{x}}\in\mathbb{R}^{p}$,
the norm $\|\cdot\|$ is defined as $\|\textbf{\textit{x}}\|=\sqrt{\sum_{i=1}^{p}x_{i}^{2}}$. The notation $\textbf{\textit{x}}\in R^{n}_{++}$ means all elements of $\textbf{\textit{x}}$ are positive. The indictor function of a set $\mathcal{A}$ is denoted as $\delta_{\mathcal{A}}(\cdot)$, which means the value of this function is zero when variable in the set $\mathcal{A}$ and is $\infty$ otherwise. For any index set $I$, its cardinal number  is denoted as $|I|$ that counts the number in the index set $I$, and its complementary set is denoted as $I^{c}$.  In this paper, the notation 0 may represent scalar, vector and matrix, which can be inferred from the context.
\section{Preliminaries}
This section presents the adaptive nuclear norm regularized trace regression model and review some related basic results.
\subsection{Model Analysis}
\label{2.1}
The statistical model of the trace regression is
\begin{eqnarray*}
y = \langle X,B^{*}\rangle+\epsilon,
\end{eqnarray*}
where $X\in \mathbb{R}^{p\times q}$ is the prediction variable, $y\in \mathbb{R}$ is the response variable, $\epsilon\in \mathbb{R}$ is a random error and $B^{*}\in \mathbb{R}^{p\times q}$ is the true coefficient. By sampling $n$ times, we get
\begin{eqnarray*}
y_{i} = \langle X_{i},B^{*}\rangle+\epsilon_{i}, \quad i=1,2,\cdots,n.
\end{eqnarray*}
Let $\mathcal{X}=(\rm{vec}(\textit{X}_{1}),\cdots,\rm{vec}(\textit{X}_{\textit{n}}))^{\top}\in\mathbb{R}^{\textit{n}\times \textit{p}_{1} \textit{p}_{2}}$, $\boldsymbol{y}=(y_{1},\cdots,y_{n})^{\top}\in\mathbb{R}^{n}$ and $\varepsilon=(\epsilon_{1},\cdots,\epsilon_{n})^{\top}\in\mathbb{R}^{n}$. The sample model can be written as
\begin{eqnarray*}
\boldsymbol{y} = \mathcal{X}\rm{vec}(\textit{B}^{*})+\varepsilon.
\end{eqnarray*}
To estimate the unknown matrix $B^{*}$, there are some literatures about the nuclear norm regularized trace regression, such as Koltchinskii et al. \cite{K11},  Negahban and Wainwright \cite{N11}, Zhou and Li \cite{Z14} and so on, which is
\begin{eqnarray*}
\underset{B\in \mathbb{R}^{p\times q}}\min\left\{\frac{1}{2n}\sum\limits_{i=1}^{n}\left(y_{i}-\langle X_{i},B\rangle\right)^{2}+\lambda\|B\|_{*}\right\}.
\end{eqnarray*}
Here, $\lambda>0$ is a tuning parameter. However, this model  can hardly achieve the rank consistent solution in high-dimensional case as in Bach \cite{B08}. So, the adaptive nuclear norm regularized trace regression  was proposed in Bach \cite{B08} as
\begin{eqnarray}\label{eq1}
\underset{B\in \mathbb{R}^{p\times q}}\min\left\{\frac{1}{2n}\sum\limits_{i=1}^{n}\left(y_{i}-\langle X_{i},B\rangle\right)^{2}+\lambda\|W_{1}BW_{2}\|_{*}\right\},
\end{eqnarray}
where $W_{1}$ and $W_{2}$ are given weight matrixes based on the solution of least square trace regression. The detailed results of these matrixes are showed as follows.
Let
$$\hat{B}_{LS}=\underset{B\in \mathbb{R}^{p\times q}}{\arg\min}\left\{\frac{1}{2n}\sum\limits_{i=1}^{n}\left(y_{i}-\langle X_{i},B\rangle\right)^{2}\right\}.$$
Suppose $\hat{B}_{LS}$ has full rank. Let the singular value decomposition of $\hat{B}_{LS}$ be $\hat{B}_{LS}=U_{LS}$Diag$(s^{LS})V^{\top}_{LS}$,
where $U_{LS}\in\mathbb{O}^{p}$, $V_{LS}\in\mathbb{O}^{q}$, $s^{LS}\in \mathbb{R}^{\min\{p,q\}}$
and Diag$(s^{LS})$ is a matrix whose diagonal vector is the singular vector $s^{LS}$. $s^{LS}$ can be completed by $n^{-1/2}$ to reach dimensions $p$ or $q$. Then,
\begin{center}
$W_{1}=U_{LS}\rm{Diag}(\textit{s}^{\textit{LS}})^{-\gamma}$$\textit{U}^{\top}_{\textit{LS}}$, $W_{2}=V_{LS}\rm{Diag}(\textit{s}^{\textit{LS}})^{-\gamma}$$\textit{V}^{\top}_{\textit{LS}}$ , $\gamma\in(0,1]$.
\end{center}
Based on expressions  of  $W_{1}$ and $W_{2}$, they are obvious symmetric matrixes, i.e., $W^{\top}_{1}=W_{1}$ and
$W^{\top}_{2}=W_{2}$. In addition, $W^{-1}_{1}=U_{LS}\rm{Diag}(\textit{s}^{\textit{LS}})^{\gamma}$$\textit{U}^{\top}_{\textit{LS}}$ and $W^{-1}_{2}=V_{LS}\rm{Diag}(\textit{s}^{\textit{LS}})^{\gamma}$$\textit{V}^{\top}_{\textit{LS}}$.

\subsection{Basic  Results}
\label{2.2}
In this section, some related results are  proposed and reviewed, which provide theoretical guarantees for the rest of this paper. The following definitions are from Rockafellar \cite{R70}.
\begin{Definition}
Let $f:\mathbb{R}^{p\times q}\rightarrow (-\infty,+\infty]$ be a proper closed convex function and let $M\in \mathbb{R}^{p\times q}$.
The  subdifferential of $f$ at $M$ is denoted by $\partial f(M)$ and
\begin{center}
$\partial f(M)=\left\{G\in \mathbb{R}^{p\times q}: f(N)\geq f(M)+\langle G,N-M\rangle, \forall N\in \mathbb{R}^{p\times q} \right\}$.
\end{center}
\end{Definition}
According to this definition, we compute the subdifferential of $f(M)=\lambda\|W_{1}MW_{2}\|_{*}$ as follows.
\begin{Proposition}
For any $M\in\mathbb{R}^{p\times q}$, let $r$ be the rank of $W_{1}MW_{2}$. The subdifferential of $\|W_{1}MW_{2}\|_{*}$ is
\begin{center}
$\partial \|W_{1}MW_{2}\|_{*}=\left\{ W_{1}(U_{r}V^{\top}_{r}+N)W_{2}:W_{1}MW_{2}=U_{r}\rm{Diag}(\sigma_{\textit{r}})\textit{V}^{\top}_{\textit{r}},\|\textit{N}\|_{2}\leq1,
\textit{U}^{\top}_{\textit{r}}\textit{N}=0,\textit{NV}_{\textit{r}}=0\right\}.$
\end{center}
\end{Proposition}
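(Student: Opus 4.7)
The plan is to obtain the formula as a direct application of the chain rule for subdifferentials of a proper closed convex function composed with a surjective linear map. Concretely, I would set $f(Y)=\|Y\|_{*}$ and define the linear operator $L:\mathbb{R}^{p\times q}\to\mathbb{R}^{p\times q}$ by $L(M)=W_{1}MW_{2}$, so that $\|W_{1}MW_{2}\|_{*}=(f\circ L)(M)$. The Weyl/Ky-Fan characterization (recorded for example in Rockafellar--Wets or Watson's classical paper on the subdifferential of the nuclear norm) gives
\begin{equation*}
\partial f(Y)=\bigl\{U_{r}V_{r}^{\top}+N:\ \|N\|_{2}\leq 1,\ U_{r}^{\top}N=0,\ NV_{r}=0\bigr\},
\end{equation*}
whenever $Y=U_{r}\,\mathrm{Diag}(\sigma_{r})\,V_{r}^{\top}$ is a compact SVD of $Y$. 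I would quote this as a known fact, since it is the rank-$r$ analogue of the standard subdifferential of $\|\cdot\|_{*}$.

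Next I would compute the adjoint of $L$ with respect to the Frobenius inner product: for any $Y\in\mathbb{R}^{p\times q}$,
\begin{equation*}
\langle L(M),Y\rangle=\mathrm{tr}\bigl((W_{1}MW_{2})^{\top}Y\bigr)=\mathrm{tr}\bigl(M^{\top}W_{1}^{\top}YW_{2}^{\top}\bigr)=\langle M,W_{1}^{\top}YW_{2}^{\top}\rangle,
\end{equation*}
so $L^{*}(Y)=W_{1}^{\top}YW_{2}^{\top}=W_{1}YW_{2}$ using the symmetry $W_{1}^{\top}=W_{1}$, $W_{2}^{\top}=W_{2}$ established in Section~\ref{2.1}. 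Because $W_{1}$ and $W_{2}$ are invertible (their diagonal factors are strictly positive by construction, since $s^{LS}$ is completed by $n^{-1/2}$ to reach full size), $L$ is a bijection, and in particular $\mathrm{range}(L)=\mathbb{R}^{p\times q}$ meets the relative interior of $\mathrm{dom}\,f=\mathbb{R}^{p\times q}$. The Moreau--Rockafellar chain rule therefore yields
\begin{equation*}
\partial(f\circ L)(M)=L^{*}\bigl(\partial f(L(M))\bigr)=W_{1}\,\partial f(W_{1}MW_{2})\,W_{2}.
\end{equation*}

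Substituting the SVD description of $\partial f$ at $Y=W_{1}MW_{2}=U_{r}\,\mathrm{Diag}(\sigma_{r})\,V_{r}^{\top}$ into this identity produces exactly the claimed set. Multiplying by $\lambda>0$ (if one wants the subdifferential of $\lambda\|W_{1}MW_{2}\|_{*}$ rather than the unweighted version stated in the proposition) merely scales the set, with the trace-class ball $\{N:\|N\|_{2}\leq\lambda\}$ replacing the unit ball. No genuine obstacle is expected: the only subtle point is verifying that the chain rule is applicable, and this reduces to the invertibility of $W_{1},W_{2}$, which is guaranteed by the construction of the weight matrices in Section~\ref{2.1}. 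The rest is bookkeeping about adjoints and transposes.
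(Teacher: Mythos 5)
Your proposal is correct and follows essentially the same route as the paper: both write $\|W_{1}MW_{2}\|_{*}$ as the composition of the nuclear norm with the linear map $M\mapsto W_{1}MW_{2}$, invoke Watson's characterization of $\partial\|\cdot\|_{*}$ at the compact SVD, and push it through the adjoint $N\mapsto W_{1}NW_{2}$ via the chain rule (the paper cites Beck's Theorem~3.43 where you invoke Moreau--Rockafellar, and you are slightly more explicit about the qualification condition via invertibility of $W_{1},W_{2}$). No substantive difference.
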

\begin{proof}
It is clear that $f(M)=h(g(M))$ with $g(M)=W_{1}MW_{2}$ and $h(M)=\lambda\|M\|_{*}$. Because $g$ is a linear transformation,
according to \cite[Theorem 3.43]{B17},
\begin{center}
$\partial f(M)=g^{\top}\left(\partial_{g(M)}h(g(M))\right)$
\end{center}
with $g^{\top}$ being the adjoint operator of $g$.

According to the definition of the adjoint operator, $\langle g(M),N\rangle=\langle M,g^{\top}(N)\rangle $, which leads to $g^{\top}(N)=W^{\top}_{1}NW^{\top}_{2}=W_{1}NW_{2}$.

Based on  the singular value decomposition of $W_{1}MW_{2}$ and the subdifferential of the nuclear norm in Watson \cite{W92}, it holds that
\begin{center}
$\partial_{g(M)}h(g(M))=\left\{U_{r}V^{\top}_{r}+N:\|N\|_{2}\leq1,U^{\top}_{r}N=0,NV_{r}=0\right\}$.
\end{center}
Combing these results, the desired result can be obtained.
\end{proof}
The conjugate function is the basis of duality theory, which is the main theoretical guarantee in Section 3, so we review it in the next definition.
\begin{Definition}
Let $f: \mathbb{R}^{p\times q}\rightarrow (-\infty,+\infty]$ be a proper closed convex function. For any $M\in \mathbb{R}^{p\times q}$,
the conjugate function $f^{*}: \mathbb{R}^{p\times q}\rightarrow \mathbb{R}$ is defined as
$$f^{*}(M)= \underset {N\in \mathbb{R}^{p\times q}} \sup\left\{\langle M,N\rangle-f(N)\right\}.$$
\end{Definition}
According to this definition, we compute the conjugate function of $f(M)=\|W_{1}MW_{2}\|_{*}$ as follows.
\begin{Proposition}
For any $M\in\mathbb{R}^{p\times q}$, the conjugate function of  $\|W_{1}MW_{2}\|_{*}$ is
\begin{center}
$(\|W_{1}MW_{2}\|_{*})^{*}=\delta_{\{\|W^{-1}_{1}\cdot W^{-1}_{2}\|_{2}\leq 1\}}(M)$.
\end{center}
\end{Proposition}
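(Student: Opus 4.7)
The plan is to reduce the computation to the well-known fact that the conjugate of the nuclear norm is the indicator function of the spectral-norm unit ball, via a change of variables that uses the invertibility and symmetry of $W_{1}$ and $W_{2}$.

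First I would write out the definition
\begin{eqnarray*}
(\|W_{1}\cdot W_{2}\|_{*})^{*}(M)=\sup_{N\in\mathbb{R}^{p\times q}}\left\{\langle M,N\rangle-\|W_{1}NW_{2}\|_{*}\right\}.
\end{eqnarray*}
Since $W_{1}$ and $W_{2}$ are invertible, I would perform the change of variables $Z=W_{1}NW_{2}$, i.e.\ $N=W_{1}^{-1}ZW_{2}^{-1}$, which is a bijection on $\mathbb{R}^{p\times q}$ and therefore leaves the supremum invariant.

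Next I would rewrite the inner product. Using $\langle M,N\rangle=\mathrm{tr}(M^{\top}N)$ together with the cyclic property of the trace and the symmetry $W_{1}^{-\top}=W_{1}^{-1}$, $W_{2}^{-\top}=W_{2}^{-1}$ (noted right after the definitions of $W_{1},W_{2}$ in Section~\ref{2.1}), I would obtain
\begin{eqnarray*}
\langle M,W_{1}^{-1}ZW_{2}^{-1}\rangle=\langle W_{1}^{-1}MW_{2}^{-1},Z\rangle.
\end{eqnarray*}
Substituting back gives
\begin{eqnarray*}
(\|W_{1}\cdot W_{2}\|_{*})^{*}(M)=\sup_{Z\in\mathbb{R}^{p\times q}}\left\{\langle W_{1}^{-1}MW_{2}^{-1},Z\rangle-\|Z\|_{*}\right\}=(\|\cdot\|_{*})^{*}(W_{1}^{-1}MW_{2}^{-1}).
\end{eqnarray*}

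Finally I would invoke the classical duality between the nuclear norm and the spectral norm, namely $(\|\cdot\|_{*})^{*}=\delta_{\{\|\cdot\|_{2}\leq 1\}}$, to conclude
\begin{eqnarray*}
(\|W_{1}\cdot W_{2}\|_{*})^{*}(M)=\delta_{\{\|\cdot\|_{2}\leq 1\}}(W_{1}^{-1}MW_{2}^{-1})=\delta_{\{\|W_{1}^{-1}\cdot W_{2}^{-1}\|_{2}\leq 1\}}(M),
\end{eqnarray*}
which is the claimed formula.

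The argument is essentially a one-line change of variables, so there is no real obstacle; the only point that needs care is to verify that the adjoint of the linear map $N\mapsto W_{1}NW_{2}$ is itself (because of the symmetry of $W_{1},W_{2}$), so that the inner product transforms exactly as above without introducing stray transposes. Once that identification is made, the standard nuclear/spectral duality finishes the proof.
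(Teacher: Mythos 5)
Your proof is correct, but it takes a genuinely different route from the paper's. The paper reformulates the supremum as a constrained maximization by introducing $C=W_{1}NW_{2}$ as an explicit equality constraint, forms the Lagrangian $\mathcal{L}(N,C;\Lambda)$, maximizes separately over $N$ (yielding the constraint $M=W_{1}\Lambda W_{2}$) and over $C$ (yielding $\delta_{\{\|\cdot\|_{2}\leq1\}}(\Lambda)$ via the nuclear/spectral conjugacy), and then concludes by sandwiching the conjugate between $0$ and the Lagrangian upper bound. You instead perform the exact change of variables $Z=W_{1}NW_{2}$, which is a bijection since $W_{1},W_{2}$ are invertible, identify the adjoint of $N\mapsto W_{1}^{-1}ZW_{2}^{-1}$ using the symmetry of the weight matrices, and reduce directly to $(\|\cdot\|_{*})^{*}(W_{1}^{-1}MW_{2}^{-1})$. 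Both arguments ultimately rest on the same key fact, $(\|\cdot\|_{*})^{*}=\delta_{\{\|\cdot\|_{2}\leq1\}}$, but your version is an unconditional chain of equalities, whereas the paper's final sandwich step $0\leq f^{*}(M)\leq\max_{N,C}\mathcal{L}$ only pins down the value of the conjugate where the upper bound is finite (it does not by itself show $f^{*}(M)=+\infty$ when $M$ lies outside the set $\{\|W_{1}^{-1}\cdot W_{2}^{-1}\|_{2}\leq1\}$, which would require a strong-duality or attainment argument). In that sense your more elementary substitution argument is both shorter and tighter; the paper's Lagrangian route has the modest advantage of generalizing to situations where the linear map $N\mapsto W_{1}NW_{2}$ is not invertible, but that generality is not needed here.
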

\begin{proof}
Based on the definition of the conjugate function,
\begin{equation*}
\begin{aligned}
(\|W_{1}MW_{2}\|_{*})^{*}&=\underset{N\in\mathbb{R}^{p\times q}}\max\left\{\langle N,M\rangle-\|W_{1}NW_{2}\|_{*}\right\}\\
&=\underset{N\in \mathbb{R}^{p\times q},C\in \mathbb{R}^{p\times q}}{\max}\left\{\langle N,M\rangle-\|C\|_{*}\right\}\\
&\quad\quad\quad s.t. \quad C-W_{1}NW_{2}=0.
\end{aligned}
\end{equation*}
The Lagrangian function of this model is
\begin{equation*}
\begin{aligned}
\mathcal{L}(N,C;\Lambda)&=\langle N,M\rangle-\|C\|_{*}+\langle \Lambda,C-W_{1}NW_{2}\rangle\\
&=\left\{\langle N,M\rangle-\langle \Lambda,W_{1}NW_{2}\rangle\right\}+\left\{\langle \Lambda,C\rangle-\|C\|_{*}\right\}.
\end{aligned}
\end{equation*}
Then,
\begin{equation*}
\begin{aligned}
\underset{N,C}\max\mathcal{L}(N,C;\Lambda)&=\underset{N}\max\left\{\langle N,M\rangle-\langle \Lambda,W_{1}NW_{2}\rangle\right\}+\underset{C}\max\left\{\langle \Lambda,C\rangle-\|C\|_{*}\right\}\\
&=\delta_{\{W_{1}\Lambda W_{2}\}}(M)+\delta_{\{\|\cdot\|_{2}\leq1\}}(\Lambda)\\
&=
\begin{cases}
0,& M=W_{1}\Lambda W_{2}, \|\Lambda\|_{2}\leq1\\
\infty, &\rm{otherwise},
\end{cases}
=\delta_{\{\|W^{-1}_{1}\cdot W^{-1}_{2}\|_{2}\leq1\}}(M),
\end{aligned}
\end{equation*}
where the second equality is due to the conjugate function of nuclear norm, i.e., $(\|\Lambda\|_{*})^{*}=\delta_{\{\|\cdot\|_{2}\leq1\}}(\Lambda)$. Note that $0\leq(\lambda\|W_{1}MW_{2}\|_{*})^{*}\leq\underset{N,C}\max\mathcal{L}(N,C;\Lambda)$. The desired result can be obtained. 
\end{proof}
There is another widely used function, called the proximal mapping.
\begin{Definition}
Let $f:\mathbb{R}^{p\times q}\rightarrow (-\infty,+\infty]$ be a proper closed convex function. The proximal mapping of $f$ under any $M\in \mathbb{R}^{p\times q}$
is  given by
\begin{center}
prox$_{f}(M)=\underset{N\in \mathbb{R}^{p\times q}}{\arg\min}\left\{f(N)+\frac{1}{2}\|N-M\|^{2}_{2}\right\}$.
\end{center}
\end{Definition}
From the result in Beck \cite{B17}, we review the proximal mapping of $f(M)=\|M\|_{*}$ as follows.
\begin{Example}
For any $M\in\mathbb{R}^{p\times q}$. The proximal mapping of $\|M\|_{*}$ is
\begin{center}
prox$_{\|\cdot\|_{*}}(M)=U\rm{Diag}((\sigma(\textit{M})-1)_{+})$$\textit{V}^{\top}$,
\end{center}
where $M=U\rm{Diag}(\sigma(\textit{M}))$$\textit{V}^{\top}$ and $(x)_{+}$ means the maximal value of zero and $x$.
\end{Example}

\section{Safe Subspace Screening Rule}
In this section, we provide a safe accelerate technique for (\ref{eq1}). The computational algorithm is the basic method to compute the solution of (\ref{eq1}) under any tuning parameter. To obtain the best effect of this model, we must select the best tuning parameter, which is usually selected from a solution path of (\ref{eq1}). Therefore, we design a safe subspace screening rule to accelerate the computation of the solution path.
\subsection{Matrix Decomposition}
In this section, we review some related facts about the matrix space. For any matrix $B\in\mathbb{R}^{p\times q}$, this matrix can be expressed as a linear combination of $pq$ rank one matrices, i.e.,
\begin{center}
$B=\sum\limits_{j=1}^{p}\sum\limits_{k=1}^{q} B_{j,k}\boldsymbol{e}_{j}\tilde{\boldsymbol{e}}^{\top}_{k}$,
\end{center}
where $\boldsymbol{e}_{j}\in\mathbb{R}^{p}$ with its $j_{th}$ element is one and other elements are zero, $\tilde{\boldsymbol{e}}_{k}\in\mathbb{R}^{q}$ with its $k_{th}$  elements is one and others are zero.  In addition, $\{\boldsymbol{e}_{j}\}_{j=1}^{p}$ is a standard orthogonal basis of the vector space
 $\mathbb{R}^{p}$ and  $\{\tilde{\boldsymbol{e}}_{k}\}_{k=1}^{q}$ is a standard orthogonal basis of the vector space $\mathbb{R}^{q}$. In fact, suppose  $\{\boldsymbol{u}_{j}\}_{j=1}^{p}$ is an another  standard orthogonal basis of  $\mathbb{R}^{p}$ and  $\{\boldsymbol{v}_{k}\}_{k=1}^{q}$ is an another standard orthogonal basis of $\mathbb{R}^{q}$, then the matrix $B$ can be expressed as
\begin{center}
$B=\sum\limits_{j=1}^{p}\sum\limits_{k=1}^{q} \Theta_{j,k}\boldsymbol{u}_{j}\boldsymbol{v}^{\top}_{k}$,
\end{center}
which means there exists a matrix $\Theta\in\mathbb{R}^{p\times q}$ such that
\begin{center}
$B=U\Theta V^{\top}=\sum\limits_{j=1}^{p}\sum\limits_{k=1}^{q}\Theta_{j,k}\boldsymbol{u}_{j}\boldsymbol{v}^{\top}_{k}$.
\end{center}
In this case, $U=\left(\boldsymbol{u}_{1};\boldsymbol{u}_{2};\cdots;\boldsymbol{u}_{p}\right)\in\mathbb{O}^{p}$ and $V=\left(\boldsymbol{v}_{1};\boldsymbol{v}_{2};\cdots;\boldsymbol{v}_{p}\right)\in\mathbb{O}^{q}$.

Specially, if matrices $U$ and $V$ are singular matrices of $B$, then
\begin{center}
$B=\sum\limits_{j=1}^{r}\sigma_{j}(B)\boldsymbol{u}_{j}\boldsymbol{v}^{\top}_{j}$,
\end{center}
where $r\leq\min\{p,q\}$ is the rank of $B$, $\sigma_{j}(B)$ are singular values of $B$. Based on these results, for any matrix $B$ with rank being $r$, its linear combination number can be $r$.

\subsection{Safe Subspace Screening Rule}
Here, we assume $\mathcal{X}\in\mathbb{R}^{n\times pq}$ has full row rank, which is easily satisfied when $n\leq pq$. Given two orthogonal matrixes $U=(\textbf{u}_{1},\textbf{u}_{2},\cdots,\textbf{u}_{p})\in\mathbb{R}^{p\times p}$ with $\left\{\textbf{u}_{1},\textbf{u}_{2},\cdots,\textbf{u}_{p}\right\}$ being an orthogonal basis of $\mathbb{R}^{p\times p}$ and $V=(\textbf{v}_{1},\textbf{v}_{2},\cdots,\textbf{v}_{q})\in\mathbb{R}^{q\times q}$ with $\left\{\textbf{v}_{1},\textbf{v}_{2},\cdots,\textbf{v}_{q}\right\}$ being an orthogonal basis of $\mathbb{R}^{q\times q}$. Then, for any matrix $B\in\mathbb{R}^{p\times q}$, there exists a matrix $\Theta\in\mathbb{R}^{p\times q}$ such that
\begin{center}
$B=U\Theta V^{\top}=\sum\limits_{j=1}^{p}\sum\limits_{k=1}^{q}\Theta_{j,k}\textbf{u}_{j}\textbf{v}^{\top}_{k}$, $\|W_{1}BW_{2}\|_{*}=\|W_{1}U\Theta V^{\top}W_{2}\|_{*}$.
\end{center}
Therefore, the solution of (\ref{eq1}) can be obtained by solving the following model.
\begin{eqnarray}\label{eq3}
\underset{\Theta\in \mathbb{R}^{p\times q}}\min\left\{\frac{1}{2n}\sum\limits_{i=1}^{n}\left(y_{i}-
\left\langle X_{i},U\Theta V^{\top}\right\rangle\right)^{2}+\lambda\|W_{1}U\Theta V^{\top}W_{2}\|_{*}\right\},
\end{eqnarray}
Denote $\hat{\Theta}(\lambda)$ as the solution of this model under $\lambda$. Then the solution $\hat{B}(\lambda)=U\hat{\Theta}(\lambda)V^{\top}$.  According to this new expression of  the solution $\hat{B}(\lambda)$, there is a basic result that can be used to simplify the calculation of $\hat{\Theta}(\lambda)$, which is $\hat{\Theta}_{j,k}(\lambda)=0$ if
\begin{center}
$\left\langle \hat{B}(\lambda), \textbf{u}_{j}\textbf{v}^{\top}_{k}\right\rangle=\left\langle \rm{vec}(\hat{\textit{B}}(\lambda)),\rm{vec}(\textbf{u}_{\textit{j}}\textbf{v}^{\top}_{\textit{k}})\right\rangle=0$.
\end{center}
Suppose there are two indexes $I_{1}$ and $I_{2}$ such that $I_{1}=\left\{j:\hat{\Theta}_{j,:}(\lambda)=0\right\}$ and $I_{2}=\left\{k:\hat{\Theta}_{:,k}(\lambda)=0\right\}$. Then, the problem (\ref{eq3}) can be solved by the following reduced size problem.
\begin{eqnarray*}
\underset{\Theta\in \mathbb{R}^{|I^{c}_{1}|\times |I^{c}_{2}|}}\min\left\{\frac{1}{2n}\sum\limits_{i=1}^{n}\left(y_{i}-
\left\langle X_{i},U_{:,I^{c}_{1}}\Theta V_{:,I^{c}_{2}}^{\top}\right\rangle\right)^{2}+\lambda\|W_{1}U_{:I^{c}_{1}}\Theta V_{:I^{c}_{2}}^{\top}W_{2}\|_{*}\right\}.
\end{eqnarray*}

Next, we estimate the solution $\hat{B}(\lambda)$ and eliminate zero elements of $\hat{\Theta}(\lambda)$, with the help of duality theory.
\begin{Theorem}\label{theorem4.1}
For any $\lambda_{0}\in[0,\lambda_{max})$ and $\lambda\in(\lambda_{0},\lambda_{max}]$. Suppose that the solution $\hat{\theta}(\lambda_{0})$ of (\ref{eq4}) is given. Let $j\in\{1,2,\cdots,p\}$, $k\in\{1,2,\cdots,q\}$ and $W_{j,k}(\lambda)=\max\left\{P^{(1)}_{j,k}(\lambda),P^{(2)}_{j,k}(\lambda)\right\}$, where

\noindent $P^{(1)}_{j,k}(\lambda)=\left\langle\mathcal{X}^{\top}(\mathcal{X}\mathcal{X}^{\top})^{-1}\boldsymbol{y}, \rm{vec}(\textbf{\textit{u}}_{\textit{j}}\textbf{\textit{v}}^{\top}_{\textit{k}})\right\rangle+f_{opt}$,
\begin{equation*}
f_{opt}=
\begin{cases}
\langle c,\gamma\rangle+\frac{\|\gamma\|^{2}}{2\nu}+\left(\frac{b-\langle c,\alpha\rangle}{\|\alpha\|^{2}}+\frac{\langle \gamma,\alpha\rangle}{2\nu\|\alpha\|^{2}}\right)\langle \alpha,\gamma\rangle, & \langle\gamma,\alpha\rangle<2\nu(b-\langle c,\alpha\rangle),\\
\langle c,\gamma\rangle+\eta\|\gamma\|, &\rm{else}.
\end{cases}
\end{equation*}

\noindent $P^{(2)}_{j,k}(\lambda)=-\left\langle\mathcal{X}^{\top}(\mathcal{X}\mathcal{X}^{\top})^{-1}\boldsymbol{y}, \rm{vec}(\textbf{\textit{u}}_{\textit{j}}\textbf{\textit{v}}^{\top}_{\textit{k}})\right\rangle+\tilde{f}_{opt},$
\begin{equation*}
\tilde{f}_{opt}=
\begin{cases}
-\langle c,\gamma\rangle+\frac{\|\gamma\|^{2}}{2\nu}-\left(\frac{b-\langle c,\alpha\rangle}{\|\alpha\|^{2}}+\frac{\langle \gamma,\alpha\rangle}{2\nu\|\alpha\|^{2}}\right)\langle \alpha,\gamma\rangle, & -\langle\gamma,\alpha\rangle<2\nu(b-\langle c,\alpha\rangle),\\
-\langle c,\gamma\rangle+\eta\|\gamma\|, &\rm{else}.
\end{cases}
\end{equation*}
Then,  $\hat{\Theta}_{j,:}(\lambda)=0$ if $\|W_{j,:}(\lambda)\|_{\infty}=0$ and $\hat{\Theta}_{:,k}(\lambda)=0$ if $\|W_{:,k}(\lambda)\|_{\infty}=0$.
\end{Theorem}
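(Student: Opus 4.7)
The plan uses Lagrangian duality together with a safe-region construction in the dual. First I would derive the Fenchel dual of problem (\ref{eq3}). Introducing the residual $z = \boldsymbol{y} - \mathcal{X}\mathrm{vec}(U\Theta V^\top)$, eliminating $z$ and $\Theta$, and invoking Proposition 2.2 to identify the conjugate of $\lambda\|W_1\cdot W_2\|_*$, one obtains a strongly concave dual problem (\ref{eq4}) in a variable $\theta\in\mathbb{R}^n$ subject to a single spectral-norm constraint. The KKT stationarity condition in $\Theta$ yields an affine relation between the primal optimum $\hat B(\lambda)$ and the dual optimum $\hat\theta(\lambda)$, so that for every $(j,k)$ the quantity $\langle \hat B(\lambda),\boldsymbol{u}_j\boldsymbol{v}_k^\top\rangle$ becomes a linear functional of $\hat\theta(\lambda)$. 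In particular the leading term $\langle \mathcal X^\top(\mathcal X\mathcal X^\top)^{-1}\boldsymbol y,\mathrm{vec}(\boldsymbol u_j\boldsymbol v_k^\top)\rangle$ appearing in $P^{(1)}_{j,k}$ and $P^{(2)}_{j,k}$ is precisely the contribution of the constant part of this affine map.

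Next I would build a safe region $\mathcal R(\lambda_0,\lambda)$ containing $\hat\theta(\lambda)$, using the warm start $\hat\theta(\lambda_0)$. The region is the intersection of (i) a Euclidean ball with radius $\eta$ centred at an affine image $c$ of $\hat\theta(\lambda_0)$, whose radius is controlled by strong concavity of the dual together with a duality-gap bound at the new tuning parameter, and (ii) a supporting half-space $\langle\alpha,\theta\rangle\le b$ obtained by evaluating the dual at $\hat\theta(\lambda_0)$ under $\lambda$ to upper-bound the optimal dual value at $\hat\theta(\lambda)$. The symbols $\alpha,b,c,\nu,\eta$ in the statement are exactly the parameters of this region.

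For each $(j,k)$, bounding $\pm\langle \hat B(\lambda),\boldsymbol{u}_j\boldsymbol{v}_k^\top\rangle$ from above therefore reduces to maximising a linear objective $\langle\gamma,\theta\rangle$ (with the additive constant described above) over $\mathcal R$. This is a small quadratically constrained programme, and its KKT system naturally splits into two cases: either the ball-constrained maximiser already satisfies $\langle\alpha,\theta\rangle\le b$ (interior case, giving $\langle c,\gamma\rangle+\|\gamma\|^2/(2\nu)+\eta\|\gamma\|$ before simplification), or the half-space is active, so the Lagrange multiplier contributes the cross term built from $(b-\langle c,\alpha\rangle)/\|\alpha\|^2$ and $\langle\gamma,\alpha\rangle/(2\nu\|\alpha\|^2)$. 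These two branches are exactly the two cases in $f_{opt}$, and $P^{(2)}_{j,k}$ arises from the same computation with $\gamma\mapsto-\gamma$.

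The screening conclusion then follows from linear algebra. Since $W_{j,k}(\lambda)=\max\{P^{(1)}_{j,k}(\lambda),P^{(2)}_{j,k}(\lambda)\}$ dominates both $\langle \hat B(\lambda),\boldsymbol{u}_j\boldsymbol{v}_k^\top\rangle$ and its negative, the hypothesis $\|W_{j,:}(\lambda)\|_\infty=0$ forces $\langle \hat B(\lambda),\boldsymbol{u}_j\boldsymbol{v}_k^\top\rangle=0$ for every $k$; since $\{\boldsymbol v_k\}$ is an orthonormal basis of $\mathbb{R}^q$, this is equivalent to $\boldsymbol u_j^\top\hat B(\lambda)=0$, which is equivalent to $\hat\Theta_{j,:}(\lambda)=0$ via $\hat\Theta(\lambda)=U^\top\hat B(\lambda)V$. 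The column claim is symmetric. The main obstacle I anticipate is pinning down the safe region: both deriving the half-space from sub-optimality of $\hat\theta(\lambda_0)$ at the new $\lambda$, and expressing its constants $(\alpha,b,c,\nu,\eta)$ explicitly in terms of $\mathcal X$, $W_1$, $W_2$, $\boldsymbol u_j$, $\boldsymbol v_k$, take some care; once $\mathcal R$ is in place, the two-branch closed form of $f_{opt}$ is a mechanical KKT calculation.
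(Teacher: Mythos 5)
Your overall architecture --- dualize, link $\hat B(\lambda)$ to $\hat\theta(\lambda)$ affinely through the KKT system via $\mathrm{vec}(\hat B(\lambda))=\mathcal{X}^{\top}(\mathcal{X}\mathcal{X}^{\top})^{-1}(\boldsymbol{y}+n\lambda\hat\theta(\lambda))$, trap $\hat\theta(\lambda)$ in a ball-intersect-half-space region built from $\hat\theta(\lambda_0)$, and evaluate $\max_{\theta}\pm\langle\gamma,\theta\rangle$ over that region by a two-case KKT analysis --- is exactly the paper's, and your final linear-algebra step (from $W_{j,k}(\lambda)=0$ to $\hat\Theta_{j,k}(\lambda)=0$ via the orthonormal bases) also matches.

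The genuine gap sits in the step you yourself single out as the obstacle: the provenance of the safe region. The paper uses neither a duality-gap/strong-concavity ball nor a comparison of dual objective values. Both constraints come from first-order variational inequalities, exploiting the fact that the dual feasible set $\left\{\theta:\left\|W_1^{-1}\left(\sum_{i}\theta_iX_i\right)W_2^{-1}\right\|_2\le1\right\}$ does not depend on $\lambda$: (i) optimality of $\hat\theta(\lambda_0)$ for the $\lambda_0$-dual, tested at the feasible point $\hat\theta(\lambda)$, gives the half-space $\langle\hat\theta(\lambda_0)+\boldsymbol{y}/(n\lambda_0),\theta-\hat\theta(\lambda_0)\rangle\ge0$, i.e.\ $\langle\alpha,\theta\rangle\ge b$ (note the direction is $\ge$, not $\le$ as you wrote); (ii) optimality of $\hat\theta(\lambda)$ for the $\lambda$-dual, tested at the feasible point $\hat\theta(\lambda_0)$, gives $\langle\theta+\boldsymbol{y}/(n\lambda),\hat\theta(\lambda_0)-\theta\rangle\ge0$, and completing the square turns this into the ball $\|\theta-c\|\le\eta$ with exactly the $c$ and $\eta$ of the statement. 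The mechanisms you propose would not reproduce these constants: a gap-safe ball is centred at the current dual point with a radius that requires a primal iterate at the new $\lambda$ (not available here), and upper-bounding the optimal dual value by its value at $\hat\theta(\lambda_0)$ yields a sublevel set of the dual objective --- another ball centred at $-\boldsymbol{y}/(n\lambda)$ --- rather than a half-space. Once the region is obtained from the two variational inequalities, the remainder of your plan (the two-branch $f_{opt}$ from the QCQP, the sign flip $\gamma\mapsto-\gamma$ for $P^{(2)}_{j,k}$, and the screening conclusion) goes through as written.
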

\begin{proof}

By simple calculations, we get the dual problem of (\ref{eq1}), which equals to the following problem as
\begin{equation}\label{eq4}
\begin{split}
&\underset{\theta \in \mathbb{R}^{n}}\min ~\frac{n\lambda^{2}}{2}\left\|\theta+\frac{\boldsymbol{y}}{n\lambda}\right\|^{2}\\
&s.t.\quad \left\|W^{-1}_{1}\left(\sum\limits_{i=1}^{n} \theta_{i}X_{i}\right)W^{-1}_{2}\right\|_{2}\leq1.\\
\end{split}
\end{equation}
Let $\hat{\theta}(\lambda)$ be the solution of the dual problem. The KKT system of (\ref{eq1}) and (\ref{eq4}) is
\begin{eqnarray*}
\begin{cases}
0\in W^{-1}_{1}\left(\sum\limits_{i=1}^{n} \theta_{i}X_{i}\right)W^{-1}_{2}+W_{1}\partial_{W_{1}BW_{2}}\|W_{1}BW_{2}\|_{*}W_{2},\\
\boldsymbol{y}-\mathcal{X}\rm{vec}(\textit{B})+\textit{n}\lambda\theta=0.
\end{cases}
\end{eqnarray*}
Based on the result in Rockafellar \cite{R70},  solutions of (\ref{eq1}) and (\ref{eq4}) satisfy the KKT system, which leads to
\begin{center}
$\boldsymbol{y}-\mathcal{X}\rm{vec}(\hat{\textit{B}}(\lambda))+\textit{n}\lambda\hat{\theta}(\lambda)=0$.
\end{center}
So,  solutions of (\ref{eq1}) and (\ref{eq4}) satisfy vec$(\hat{B}(\lambda))=\mathcal{X}^{\top}(\mathcal{X}\mathcal{X}^{\top})^{-1}(\boldsymbol{y}+n\lambda\hat{\theta}(\lambda))$. For any $\lambda>\lambda_{0}>0$, according to the optimality condition of the dual problem (\ref{eq4}), the solution $\hat{\theta}(\lambda)\in\Omega$ with $\Omega$ defined as
\begin{center}
$\Omega=\left\{\theta: \left\langle \hat{\theta}(\lambda_{0})+\frac{\boldsymbol{y}}{n\lambda_{0}},\theta-\hat{\theta}(\lambda_{0})\right\rangle\geq0,
\left\langle \theta+\frac{\boldsymbol{y}}{n\lambda},\hat{\theta}(\lambda_{0})-\theta\right\rangle\geq0\right\}$.
\end{center}
Then $\hat{\Theta}_{j,k}(\lambda)=0$ if
\begin{center}
$\underset{\theta\in\Omega}\max\Big|\left\langle \mathcal{X}^{\top}(\mathcal{X}\mathcal{X}^{\top})^{-1}(\boldsymbol{y}+n\lambda\theta),\rm{vec}(\textbf{\textit{u}}_{\textit{j}}
\textbf{\textit{v}}^{\top}_{\textit{k}})\right\rangle\Big|$
$=\max\left\{P^{(1)}_{j,k}(\lambda), P^{(2)}_{j,k}(\lambda)\right\}=0$,
\end{center}
where
\begin{equation*}
\begin{split}
P^{(1)}_{j,k}(\lambda)&=\underset{\theta\in\Omega}\max\left\langle \mathcal{X}^{\top}(\mathcal{X}\mathcal{X}^{\top})^{-1}(\boldsymbol{y}+n\lambda\theta),\rm{vec}(\textbf{\textit{u}}_{\textit{j}}\textbf{\textit{v}}^{\top}_{\textit{k}})\right\rangle\\
&=\left\langle\mathcal{X}^{\top}(\mathcal{X}\mathcal{X}^{\top})^{-1}\boldsymbol{y}, \rm{vec}(\textbf{\textit{u}}_{\textit{j}}\textbf{\textit{v}}^{\top}_{\textit{k}})\right\rangle+\underset{\theta\in\Omega}\max\left\langle n\lambda\mathcal{X}^{\top}(\mathcal{X}\mathcal{X}^{\top})^{-1}\theta,\rm{vec}(\textbf{\textit{u}}_{\textit{j}}\textbf{\textit{v}}^{\top}_{\textit{k}})\right\rangle
\end{split}
\end{equation*}
and
\begin{equation*}
\begin{split}
P^{(2)}_{j,k}(\lambda)&=\underset{\theta\in\Omega}\max\left\langle -\mathcal{X}^{\top}(\mathcal{X}\mathcal{X}^{\top})^{-1}(\boldsymbol{y}+n\lambda\theta),
\rm{vec}(\textbf{\textit{u}}_{\textit{j}}\textbf{\textit{v}}^{\top}_{\textit{k}})\right\rangle\\
&=\langle-\mathcal{X}^{\top}(\mathcal{X}\mathcal{X}^{\top})^{-1}\boldsymbol{y}, \rm{vec}(\textbf{\textit{u}}_{\textit{j}}\textbf{\textit{v}}^{\top}_{\textit{k}})\rangle+\underset{\theta\in\Omega}\max\left\langle -\textit{n}\lambda\mathcal{X}^{\top}(\mathcal{X}\mathcal{X}^{\top})^{-1}\theta,\rm{vec}(\textbf{\textit{u}}_{\textit{j}}\textbf{\textit{v}}^{\top}_{\textit{k}})\right\rangle.
\end{split}
\end{equation*}
Now, we give detailed results of these two values. In order to be understandable, we introduce some new  notations as follows.
\begin{equation}\label{n1}
\begin{aligned}
&\gamma=n\lambda(\mathcal{X}\mathcal{X}^{\top})^{-1}\mathcal{X}\rm{vec}(\textbf{\textit{u}}_{\textit{j}}\textbf{\textit{v}}^{\top}_{\textit{k}}), \\ &\alpha=\hat{\theta}(\lambda_{0})+\frac{\boldsymbol{y}}{n\lambda_{0}},
\quad b=\left\langle\hat{\theta}(\lambda_{0}),\alpha\right\rangle, \\
&c=\frac{1}{2}\left(\hat{\theta}(\lambda_{0})-\frac{\boldsymbol{y}}{n\lambda}\right), \quad \eta^{2}=\frac{1}{4}\left\|\hat{\theta}(\lambda_{0})+\frac{\boldsymbol{y}}{n\lambda}\right\|^{2},\\
&2\nu=\sqrt{\frac{\|\gamma\|^{2}\|\alpha\|^{2}-\langle\gamma,\alpha\rangle^{2}}{\|\alpha\|^{2}\eta^{2}-(b-\langle c,\alpha\rangle)^{2}}}.
\end{aligned}
\end{equation}

Here, we give the detailed process to calculate the value of $P^{(1)}_{j,k}(\lambda)$, which is decided by the value of the following problem.
\begin{equation*}
\begin{split}
&\underset{\theta \in \mathbb{R}^{n}}\max ~\left\langle n\lambda\mathcal{X}^{\top}(\mathcal{X}\mathcal{X}^{\top})^{-1}\theta,
\rm{vec}(\textbf{\textit{u}}_{\textit{j}}\textbf{\textit{v}}^{\top}_{\textit{k}})\right\rangle\\
&s.t.\quad \left\langle \hat{\theta}(\lambda_{0})+\frac{\boldsymbol{y}}{n\lambda_{0}},\theta-\hat{\theta}(\lambda_{0})\right\rangle\geq0,\\
&\quad \quad~~\left\langle \theta+\frac{\boldsymbol{y}}{n\lambda},\hat{\theta}(\lambda_{0})-\theta\right\rangle\geq0.
\end{split}
\end{equation*}
This problem is equivalent to
\begin{equation}\label{eq5}
\begin{split}
&\underset{\theta \in \mathbb{R}^{n}}\max ~\left\langle n\lambda(\mathcal{X}\mathcal{X}^{\top})^{-1}\mathcal{X}\rm{vec}(\textbf{\textit{u}}_{\textit{j}}\textbf{\textit{v}}^{\top}_{\textit{k}}),\theta\right\rangle\\
&s.t.\quad \left\langle \theta,\hat{\theta}(\lambda_{0})+\frac{\boldsymbol{y}}{n\lambda_{0}}\right\rangle
-\left\langle\hat{\theta}(\lambda_{0}),\hat{\theta}(\lambda_{0})+\frac{\boldsymbol{y}}{n\lambda_{0}}\right\rangle\geq0,\\
&\quad ~~\quad  \frac{1}{4}\left\|\hat{\theta}(\lambda_{0})+\frac{\boldsymbol{y}}{n\lambda}\right\|^{2}-\left\|\theta-\frac{1}{2}\left(\hat{\theta}(\lambda_{0})-\frac{\boldsymbol{y}}{n\lambda}\right)\right\|^{2}
\geq 0.
\end{split}
\end{equation}
Based on notations in the equation (\ref{n1}), the problem (\ref{eq5}) is equivalent to
\begin{equation}\label{eq6}
\begin{split}
f_{opt}&=\underset{\theta \in \mathbb{R}^{n}}\max ~\langle \gamma,\theta\rangle\\
&s.t.\quad \langle \theta,\alpha\rangle-b\geq 0,\\
&\quad ~~\quad  \eta^{2}-\left\|\theta-c\right\|^{2}\geq0 .
\end{split}
\end{equation}
By introducing Lagrange multipliers $\mu\geq0$ and $\nu\geq0$, Lagrangian function of (\ref{eq6}) is
\begin{center}
$\mathcal{L}(\theta;\mu,\nu)=\langle \gamma,\theta\rangle+\mu(\langle \theta,\alpha\rangle-b)+\nu(\eta^{2}-\left\|\theta-c\right\|^{2})$.
\end{center}
To obtain the dual problem of (\ref{eq6}), we need to compute the value of the following problem.
\begin{equation*}
\begin{aligned}
\underset{\theta}\max\mathcal{L}(\theta;\mu,\nu)
&= \underset{\theta}\max\left\{\langle \gamma+\mu\alpha,\theta\rangle-\nu\left\|\theta-c\right\|^{2}\right\}-\mu b+\nu\eta^{2}\\
&= \begin{cases}
\frac{\|\gamma+\mu\alpha\|^{2}}{4\nu}+\langle c,\gamma+\mu\alpha\rangle-\mu b+\nu \eta^{2},& \nu>0,\\
+\infty, &\nu=0.
\end{cases}
\end{aligned}
\end{equation*}
Therefore, the dual problem of (\ref{eq6}) is
\begin{equation*}
\begin{aligned}
\underset{\mu\geq0,\nu\geq0}\min\underset{\theta}\max\mathcal{L}(\theta;\mu,\nu)=\underset{\mu\geq0,\nu>0}\min ~\frac{\|\gamma+\mu\alpha\|^{2}}{4\nu}+\langle c,\gamma+\mu\alpha\rangle-\mu b+\nu \eta^{2}.
\end{aligned}
\end{equation*}
The KKT system is
\begin{eqnarray*}
\begin{cases}
\theta=c+\frac{1}{2\nu}(\gamma+\mu\alpha),\\
\mu(\langle \theta,\alpha\rangle-b)=0,\mu\geq0,\langle \theta,\alpha\rangle-b\geq 0,\\
\eta^{2}-\left\|\theta-c\right\|^{2}=0,\nu>0.
\end{cases}
\end{eqnarray*}

Next, we solve the problem (\ref{eq6}) based on the KKT system. There are two cases of solutions of the KKT system.

\underline{Cases 1:} $\mu>0$. In this case, $\langle \theta,\alpha\rangle=b$. Replacing $\theta=c+\frac{1}{2\nu}(\gamma+\mu\alpha)$ from the KKT system into this equality, we know that $\langle c,\alpha\rangle+\frac{1}{2\nu}\langle \gamma+\mu\alpha,\alpha\rangle=b$, which leads to
\begin{center}
$\mu=\frac{2\nu(b-\langle c,\alpha\rangle)-\langle\gamma,\alpha\rangle}{\|\alpha\|^{2}}$.
\end{center}
To ensure that $\mu>0$, $\langle\gamma,\alpha\rangle<2\nu(b-\langle c,\alpha\rangle)$ needs to be assumed.
Replacing $\theta=c+\frac{1}{2\nu}(\gamma+\mu\alpha)$ into $\eta^{2}-\left\|\theta-c\right\|^{2}=0$, the closed-form formula of $2\nu$ is
\begin{eqnarray}\label{eq7}
2\nu=\sqrt{\frac{\|\gamma\|^{2}\|\alpha\|^{2}-\langle\gamma,\alpha\rangle^{2}}{\|\alpha\|^{2}\eta^{2}-(b-\langle c,\alpha\rangle)^{2}}}.
\end{eqnarray}
Based on Cauchy inequality,  $\|\gamma\|^{2}\|\alpha\|^{2}>\langle\gamma,\alpha\rangle^{2}$. Due to the notations of $\alpha$, $\eta$, $b$ and $c$, we know that $\|\alpha\|^{2}\eta^{2}-(b-\langle c,\alpha\rangle)^{2}>0$. So, the closed-form of $2\nu$ in (\ref{eq7}) exists and the solution of  $(\ref{eq6})$ can be obtained, which is
 \begin{center}
$\theta=c+\frac{b-\langle c,\alpha\rangle}{\|\alpha\|^{2}}\alpha-\frac{\langle \gamma,\alpha\rangle}{2\nu\|\alpha\|^{2}}\alpha+\frac{\gamma}{2\nu}$.
\end{center}

\underline{Cases 2:} $\mu=0$. In this case, $\theta=c+\frac{\gamma}{2\nu}$. Replacing this expression into $\eta^{2}-\left\|\theta-c\right\|^{2}=0$, the closed-form formula of $\nu$ is
$\nu=\frac{\|\gamma\|}{2\eta}$, which means
$$\theta=c+\frac{\eta}{\|\gamma\|}\gamma.$$

Combing these results, we get that
\begin{equation*}
f_{opt}=
\begin{cases}
\langle c,\gamma\rangle+\frac{\|\gamma\|^{2}}{2\nu}+\left(\frac{b-\langle c,\alpha\rangle}{\|\alpha\|^{2}}+\frac{\langle \gamma,\alpha\rangle}{2\nu\|\alpha\|^{2}}\right)\langle \alpha,\gamma\rangle, & \langle\gamma,\alpha\rangle<2\nu(b-\langle c,\alpha\rangle),\\
\langle c,\gamma\rangle+\eta\|\gamma\|, &\rm{else},
\end{cases}
\end{equation*}
where $2\nu$ is defined in the equation (\ref{eq7}). Hence, $$P^{(1)}_{j,k}(\lambda)=\left\langle(\mathcal{X}\mathcal{X}^{\top})^{-1}\mathcal{X}^{\top}\boldsymbol{y}, \rm{vec}(\textbf{\textit{u}}_{\textit{j}}\textbf{\textit{v}}^{\top}_{\textit{k}})\right\rangle+f_{opt}.$$
In the same way, the closed form of $P^{(2)}_{j,k}(\lambda)$ can be obtained. For simplicity of the paper, we just present the result of $P^{(2)}_{j,k}(\lambda)$.
$$P^{(2)}_{j,k}(\lambda)=-\left\langle(\mathcal{X}\mathcal{X}^{\top})^{-1}\mathcal{X}^{\top}\boldsymbol{y}, \rm{vec}(\textbf{\textit{u}}_{\textit{j}}\textbf{\textit{v}}^{\top}_{\textit{k}})\right\rangle+\tilde{f}_{opt},$$
where
\begin{equation*}
\tilde{f}_{opt}=
\begin{cases}
-\langle c,\gamma\rangle+\frac{\|\gamma\|^{2}}{2\nu}-\left(\frac{b-\langle c,\alpha\rangle}{\|\alpha\|^{2}}+\frac{\langle \gamma,\alpha\rangle}{2\nu\|\alpha\|^{2}}\right)\langle \alpha,\gamma\rangle, & -\langle\gamma,\alpha\rangle<2\nu(b-\langle c,\alpha\rangle),\\
-\langle c,\gamma\rangle+\eta\|\gamma\|, &\rm{else}.
\end{cases}
\end{equation*}
\end{proof}

Based on this theorem, the inactive rows and columns in the solution $\hat{\Theta}(\lambda)$ should be easily identified, which will reduce the dimension and accelerate the computation of the solution.

So far, an unspecified issue is about how to compute the solution  of (\ref{eq1}), under different values of $\lambda$. This is a fundamental step to evaluate the efficiency  safe subspace screening rule. In this section, we solve the problem (\ref{eq1}) via ADMM (Alternating Direction Method of Multipliers) in Algorithm 1.  To illustrate the computation effect of our screening rule, we embed the result in Theorem 3.1 in ADMM and present a accelerate version in Algorithm 2.
\begin{table}[htbp]
\centering
\begin{tabular}{l}
\hline
\textbf{Algorithm 1}: ADMM for solving (\ref{eq1}).                                                                                                                                                                                                                                                                    \\ \hline
1:\quad  \textbf{Input}: a tuning parameter $0<\lambda<\lambda_{max}$.                                                                                                                                              \\
2:\quad  \textbf{Output}: $\hat{B}(\lambda)=B^{k}$.                                                                                                                                                                                                    \\
3:\quad  \textbf{Initialization}: $k=0$, $B^{k}=0$, $\alpha^{k}=0$, $C^{k}=0$, $\theta^{k}=0$, $D^{k}=0$, $\tau\in\left(0,\frac{\sqrt{5}+1}{2}\right)$, $\sigma>0$. \\
4:\quad\quad~4.1.\quad vec$(B^{k+1})=\underset{\rm{vec}(\textit{B})}{\arg\min} \mathcal{L}_{\sigma}(B,\alpha^{k},C^{k};\theta^{k},D^{k})$;\\
\quad\quad\quad 4.2.\quad $\alpha^{k+1}=\underset{\alpha}{\arg\min} \mathcal{L}_{\sigma}(B^{k+1},\alpha,C^{k};\theta^{k},D^{k})$;\\
\quad\quad\quad 4.3.\quad  $C^{k+1}=\underset{C}{\arg\min} \mathcal{L}_{\sigma}(B^{k+1},\alpha^{k},C;\theta^{k},D^{k})$;\\
\quad\quad\quad 4.4.\quad $\theta^{k+1}=\theta^{k}-\tau\sigma\left[\boldsymbol{y}-\mathcal{X}\rm{vec}(\textit{B}^{\textit{k}+1})-\alpha^{\textit{k}+1}\right]$;\\
\quad\quad\quad 4.5.\quad $D^{k+1}=D^{k}-\tau\sigma(W_{1}B^{k+1}W_{2}-D_{k})$;\\
5:\quad If the termination conditions are not met, $k=k+1$ and go to step 4.
\\ \hline
\end{tabular}
\end{table}

First, we transform (\ref{eq1}) as a constrained problem.
\begin{equation*}
\begin{aligned}
&\underset{B\in\mathbb{R}^{p\times q},\alpha\in\mathbb{R}^{n},C\in\mathbb{R}^{p\times q}}\min
\frac{1}{2n}\|\alpha\|^{2}+\lambda\|C\|_{*}\\
& s.t.\quad \boldsymbol{y}-\mathcal{X}\rm{vec}(\textit{B})-\alpha=0, W_{1}BW_{2}-C=0.
\end{aligned}
\end{equation*}
Then, the augmented Lagrangian function of this constrained problem is
\begin{align*}
&\mathcal{L}_{\sigma}(B,\alpha,C;\theta,D)\\
&=\frac{1}{2n}\|\alpha\|^{2}+\lambda\|C\|_{*}-\left\langle \theta, \boldsymbol{y}-\mathcal{X}\rm{vec}(\textit{B})-\alpha\right\rangle-\left\langle D, W_{1}BW_{2}-C\right\rangle+\frac{\sigma}{2}\|\boldsymbol{y}-\mathcal{X}\rm{vec}(\textit{B})-\alpha\|^{2}+
\frac{\sigma}{2}\|\textit{W}_{1}\textit{BW}_{2}-\textit{C}\|^{2}_{\textit{F}}.
\end{align*}
Then, ADMM for solving (\ref{eq1}) is reviewed in Algorithm 1. Here, we compute details in Algorithm 1.
\begin{align*}
\rm{vec}(\textit{B}^{\textit{k}+1})&=\underset{\rm{vec}(\textit{B})}{\arg\min} \mathcal{L}_{\sigma}(B,\alpha^{k},C^{k};\theta^{k},D^{k})\\
&=\underset {\rm{vec}(\textit{B})}{\arg\min} \left\{\langle \mathcal{X}^{\top}\theta^{k}
-(W_{2}\bigotimes W_{1})\rm{vec}(\textit{D}^{\textit{k}}),\rm{vec}(\textit{B})\rangle+\frac{\sigma}{2}\|\boldsymbol{y}-\mathcal{X}\rm{vec}(\textit{B})-\alpha^{\textit{k}}\|^{2}+
\frac{\sigma}{2}\|(W_{2}\bigotimes W_{1})\rm{vec}(\textit{B})-\rm{vec}(\textit{C}^{\textit{k}})\|^{2}\right\}\\
&=\frac{1}{\sigma}\left(\mathcal{X}^{\top}\mathcal{X}+W^{2}_{2}\bigotimes W^{2}_{1}\right)^{-1}
\left[\left(W_{2}\bigotimes W_{1}\right)\rm{vec}(\textit{D}^{\textit{k}}+\sigma\textit{C}^{\textit{k}})+\sigma\mathcal{X}^{\top}\boldsymbol{y}-\mathcal{X}^{\top}\theta^{k}
-\sigma\mathcal{X}^{\top}\alpha^{k}\right].\\
\alpha^{k+1}&=\underset {\alpha}{\arg\min} \mathcal{L}_{\sigma}(B^{k+1},\alpha,C^{k};\theta^{k},D^{k})\\
&=\underset {\alpha} {\arg\min}\left\{\frac{1}{2n}\|\alpha\|^{2}+\langle \theta^{k},\alpha\rangle+\frac{\sigma}{2}\|\boldsymbol{y}-\mathcal{X}\rm{vec}(\textit{B}^{\textit{k}+1})-\alpha\|^{2}\right\}\\
&=\left(\frac{1}{n}+\sigma\right)^{-1}
\left[\sigma\boldsymbol{y}-\theta^{k}-\sigma\mathcal{X}\rm{vec}(\textit{B}^{\textit{k}+1})\right].\\
C^{k+1}&=\underset {C}{\arg\min} \mathcal{L}_{\sigma}(B^{k+1},\alpha^{k},C;\theta^{k},D^{k})\\
&=\underset{C}{\arg\min}\left\{\lambda\|C\|_{*}+\langle D^{k},C\rangle+\frac{\sigma}{2}\|W_{1}B^{k+1}W_{2}-C\|_{F}^{2}\right\}\\
&=\rm{prox}_{\frac{\lambda}{\sigma}\|\cdot\|_{*}}\left(\textit{W}_{1}\textit{B}^{\textit{k}+1}\textit{W}_{2}-\frac{\textit{D}_{\textit{k}}}{\sigma}\right).
\end{align*}

Because that problem (1) is convex and its convergence analysis under ADMM for are already built up in \cite{C17}, we omit this result here.

\begin{table}[htbp]
\centering
\begin{tabular}{l}
\hline
\textbf{Algorithm 2}: Sequential safe subspace screening rule to get the solution path of (\ref{eq1}).                                                                                                                                                                                                                                                                    \\ \hline
1:\quad  \textbf{Input}: a sequence of tuning parameters $0<\lambda_{1}<\lambda_{2}<\cdots<\lambda_{K}<\lambda_{max}$.                                                                                                                                              \\
2:\quad  \textbf{Output}: $\hat{B}(\lambda_{1}), \hat{B}(\lambda_{2}), \cdots, \hat{B}(\lambda_{K})$.                                                                                                                                                                                                    \\
3:\quad  \textbf{Initialization}: vec$(\hat{B}(\lambda_{1}))=\mathcal{X}^{\top}(\mathcal{X}\mathcal{X}^{\top})^{-1}\boldsymbol{y}$, $\hat{\theta}(\lambda_{1})=0$ and $\hat{B}(\lambda_{1})=U$Diag$(b_{\lambda_{1}})V^{\top}$. \\
4:\quad  \textbf{for} $m=2,3,\cdots,K$\\
5:\quad \quad 5.1. \quad Let $\lambda_{0}=\lambda_{m-1}$. According to (\ref{n1}), compute\\
\quad\quad\quad\quad\quad\quad \quad\quad\quad\quad\quad $W_{j,k}(\lambda_{m})=\max\left\{P^{(1)}_{j,k}(\lambda_{m}),P^{(2)}_{j,k}(\lambda_{m})\right\}$\\
\quad\quad\quad \quad\quad \quad for any $j=1,2,\cdots,p$ and $k=1,2,\cdots,q$.\\
\quad\quad\quad 5.2.\quad Let $I_{1}=\left\{j:\|W_{j,:}(\lambda_{m})\|_{\infty}=0\right\}$ and $I_{2}=\left\{k:\|W_{:,k}(\lambda_{m})\|_{\infty}=0\right\}$.\\
\quad\quad\quad 5.3.\quad  Solve\\
\quad\quad\quad\quad\quad\quad\quad\quad $\hat{\Theta}(\lambda_{m})=\underset{\Theta\in\mathbb{R}^{|I^{c}_{1}|\times|I^{c}_{2}|}}{\arg\min}
\left\{\frac{1}{2n}\sum\limits_{i=1}^{n}\left(y_{i}-
\left\langle X_{i},U_{:,I^{c}_{1}}\Theta V_{:,I^{c}_{2}}^{\top}\right\rangle\right)^{2}+\lambda_{m}\|W_{1}U_{:,I^{c}_{1}}\Theta V_{:,I^{c}_{2}}^{\top}W_{2}\|_{*}\right\}$.\\
\quad\quad\quad 5.4.\quad Let $\hat{B}(\lambda_{m})=U_{:,I_{1}}\hat{\Theta}(\lambda_{m})V_{:,I_{2}}^{\top}$ and $\hat{\theta}(\lambda_{m})=\frac{1}{n\lambda_{m}}\left[\boldsymbol{y}-\mathcal{X}\rm{vec}\left(\hat{\textit{B}}(\lambda_{\textit{m}})\right)\right]$.\\
\quad\quad\quad 5.5.\quad The singular value decomposition of $\hat{B}(\lambda_{m})$ is $\hat{B}(\lambda_{m})=U$Diag$(b_{\lambda_{m}})V^{\top}$.\\
6:\quad \textbf{end for}
\\ \hline
\end{tabular}
\end{table}
\section{Numerical Experiments}
Here, we evaluate the safe subspace screening rule in Algorithm 2 on some data sets. For each data set, we report the total computation time $T_{f}$ and reduced computation time $T_{s}$, where $T_{f}$ means the total computational time of ADMM in Algorithm 1 under $\lambda_{m}=0.616^{m}\lambda_{max},m=1,2,\cdots,K$  as  in \cite{Z14}, and $T_{s}$ means the computation time of Algorithm 2 under these $\lambda_{m}$. Then, speedup values are computed as the result of $T_{f}/T_{s}$. This quality, as the name illustrated, indicates  the reduced computational time because of the screening rule. The larger the speedup is, the more efficient the screening rule is.
\subsection{Gaussian Distribution Data Sets}
This section considers the simulation data sets in \cite{B08}, where the prediction matrixes follow the normal distribution/Gaussian distribution. The detail process of these simulation data sets are showed as follows. We generate random i.i.d. data $P_{i}\in\mathbb{R}^{p_{1}}$ and $Q_{i}\in\mathbb{R}^{p_{2}}$ with Gaussian distributions and we select a low rank matrix $B\in\mathbb{R}^{p\times q}$ at random. Then,  $y_{i}=\langle P_{i}Q^{\top}_{i},B\rangle + \epsilon_{i},i=1,2,\cdots,n$, where $\epsilon_{i}$ have i.i.d components with normal distributions with zero mean and 0.1 standard variance. Here, we fix the rank of the true matrix $B$ as 2 and $K=20$.

With the simulation data sets, we report the computational results of $T_{f}$, $T_{f}$, $T_{s}$ and speedup values under different $p$, $p$ and $n$ in Table 1. Because that simulation data sets under the fixed dimension are random, we simulate 10 times and report the mean and variance of these results. Based on the results in Table 1, there are some conclusions. Firstly, all $T_{f}$ values are larger than their corresponding $T_{s}$ values, meaning that our safe feature screening rule reduce the computational cost of the model and leading to the speedup values larger than 1.  Secondly, with fixed $p$ and $q$, the speedup values decrease with the sample size $n$ increasing. Lastly, the variances of speedup values is slightly larger than 0, which means our screening rule performs stable under random data sets.
\begin{table}[htbp]
\caption{Computation results under different dimensions of data set and sample sizes, including $T_{f}$, $T_{s}$ and speedup values. This table reports mean values of 10 times simulations and the variance of these results in brackets.}
\begin{tabular}{|c|c|lll|}
\hline
dimension                              & sample size & \multicolumn{1}{c}{$T_{f}$} & \multicolumn{1}{c}{$T_{s}$} & \multicolumn{1}{c|}{speedup}      \\ \hline
\multirow{3}{*}{$p=15$,$q=45$} & $n=30$      & 15.443(0.554)                & 5.094(0.022)                 & 3.034(0.028) \\
                                       & $n=50$      & 17.643(0.150)                & 10.327(0.292)                & 1.712(0.006) \\
                                       & $n=100$     & 20.013(0.556)                & 18.327(1.071)                & 1.095(0.005) \\ \hline
\multirow{3}{*}{$p=25$,$q=45$} & $n=30$      & 28.129(0.039)                & 8.289(0.026)                 & 3.375(0.004) \\
                                       & $n=50$      & 45.234(15.437)               & 12.914(0.659)                & 3.508(0.768) \\
                                       & $n=100$     & 44.484(5.444)                & 34.819(4.334)                & 1.289(0.016) \\ \hline
\multirow{3}{*}{$p=25$,$q=30$} & $n=30$      & 17.958(1.273)                & 7.303(0.116)                 & 2.465(0.039) \\
                                       & $n=50$      & 23.148(4.032)                & 10.719(0.817)                & 2.293(0.025) \\
                                       & $n=100$     & 19.945(1.292)                & 16.819(1.183)                & 1.190(0.005) \\ \hline
\multirow{3}{*}{$p=35$,$q=30$} & $n=30$      & 31.203(0.476)                & 9.114(1.558)                 & 3.482(0.192) \\
                                       & $n=50$      & 32.095(2.155)                & 12.918(1.373)                & 2.497(0.029) \\
                                       & $n=100$     & 36.767(1.852)                & 29.776(0.513)                & 1.236(0.004) \\ \hline
\multirow{3}{*}{$p=40$,$q=20$} & $n=30$      & 21.894(0.107)                & 7.792(0.068)                 & 2.811(0.007) \\
                                       & $n=50$      & 25.749(0.486)                & 16.763(0.443)                & 1.539(0.007) \\
                                       & $n=100$     & 25.517(0.199)                & 22.599(1.390)                & 1.132(0.005) \\ \hline
\end{tabular}
\end{table}

\begin{figure}[htbp]
\centering
\subfigure[device0-9]{
\begin{minipage}[htbp]{0.15\linewidth}
\centering
\includegraphics[width=0.55in]{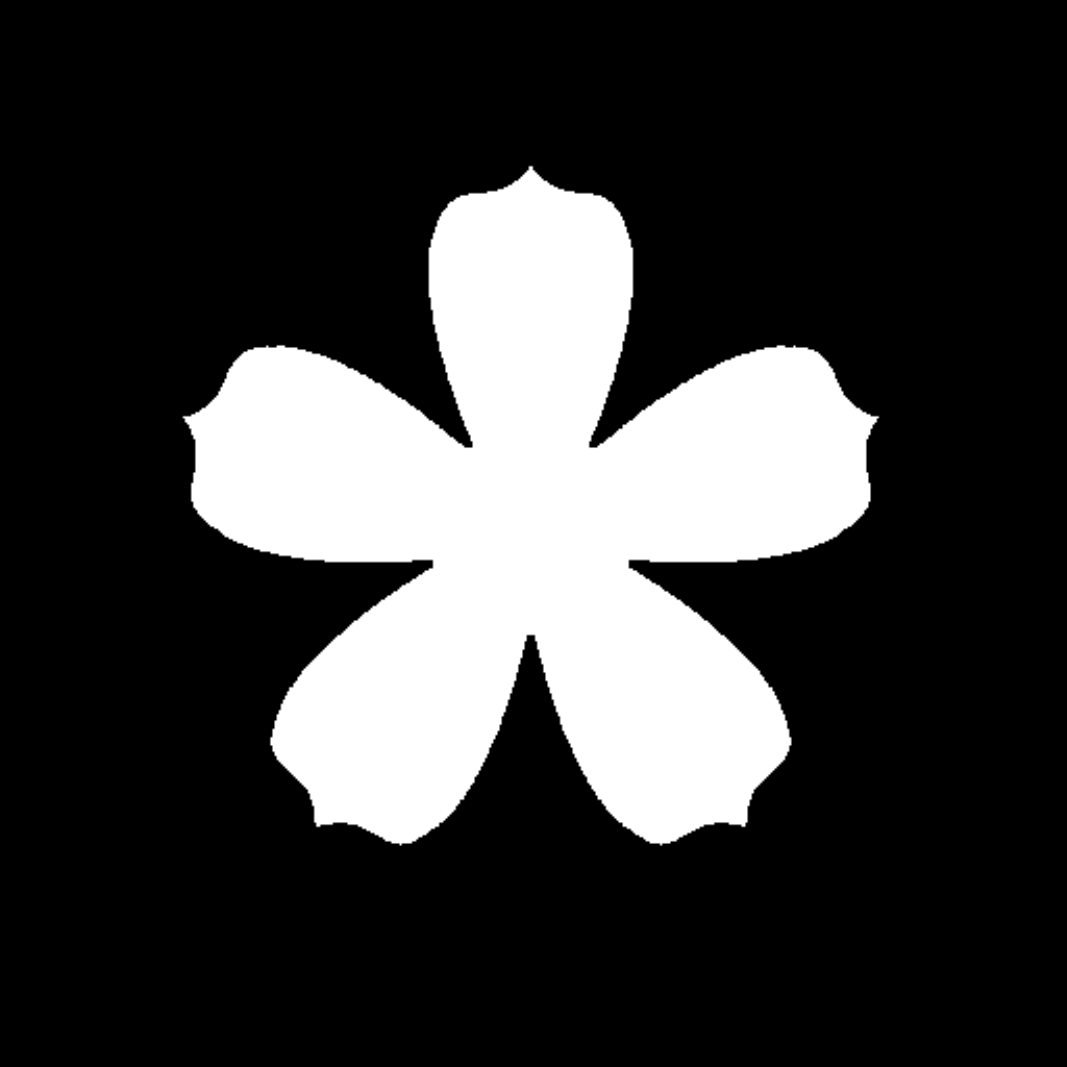}
\end{minipage}
}
\subfigure[device1-5]{
\begin{minipage}[htpb]{0.15\linewidth}
\centering
\includegraphics[width=0.55in]{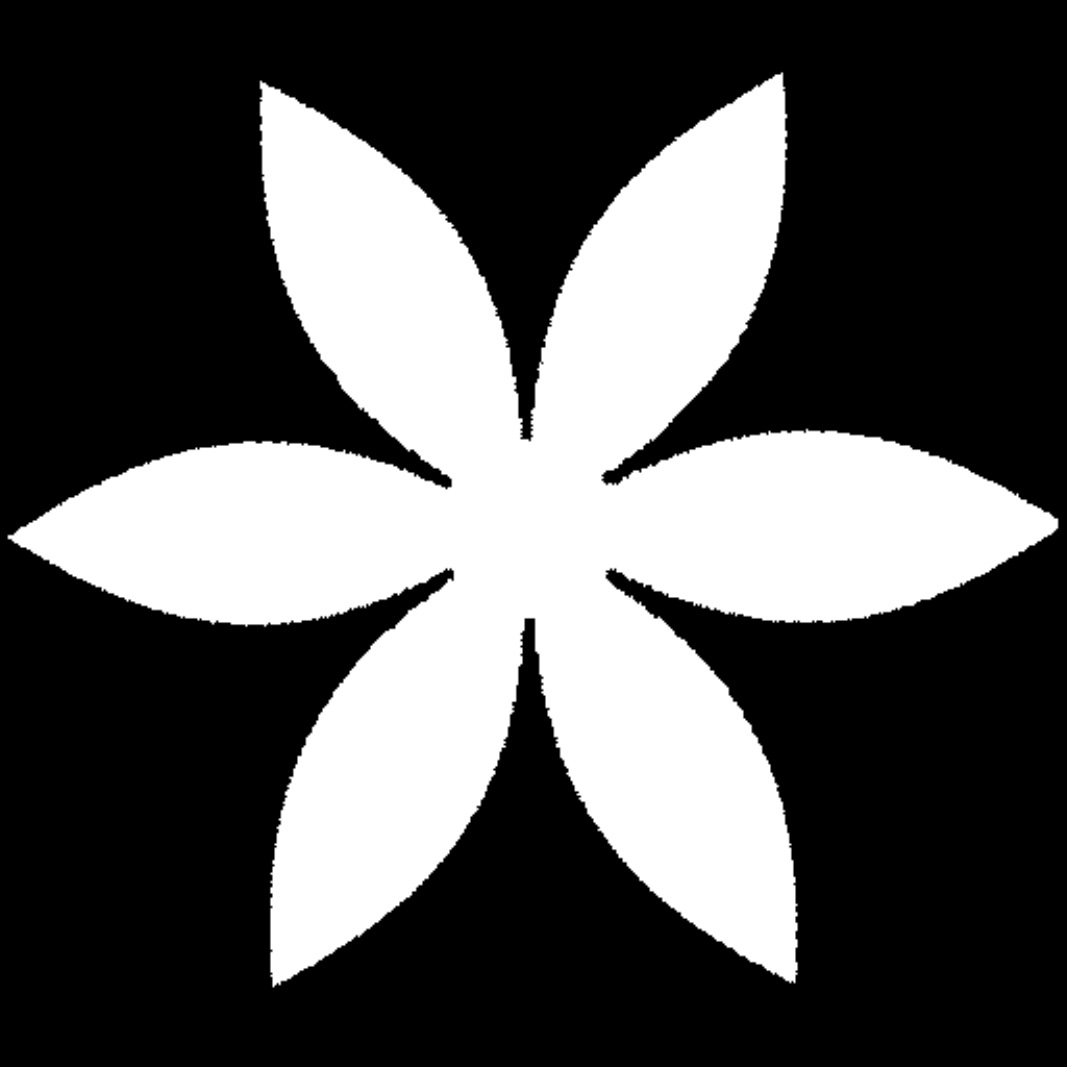}
\end{minipage}
}
\subfigure[device2-1]{
\begin{minipage}[htpb]{0.15\linewidth}
\centering
\includegraphics[width=0.55in]{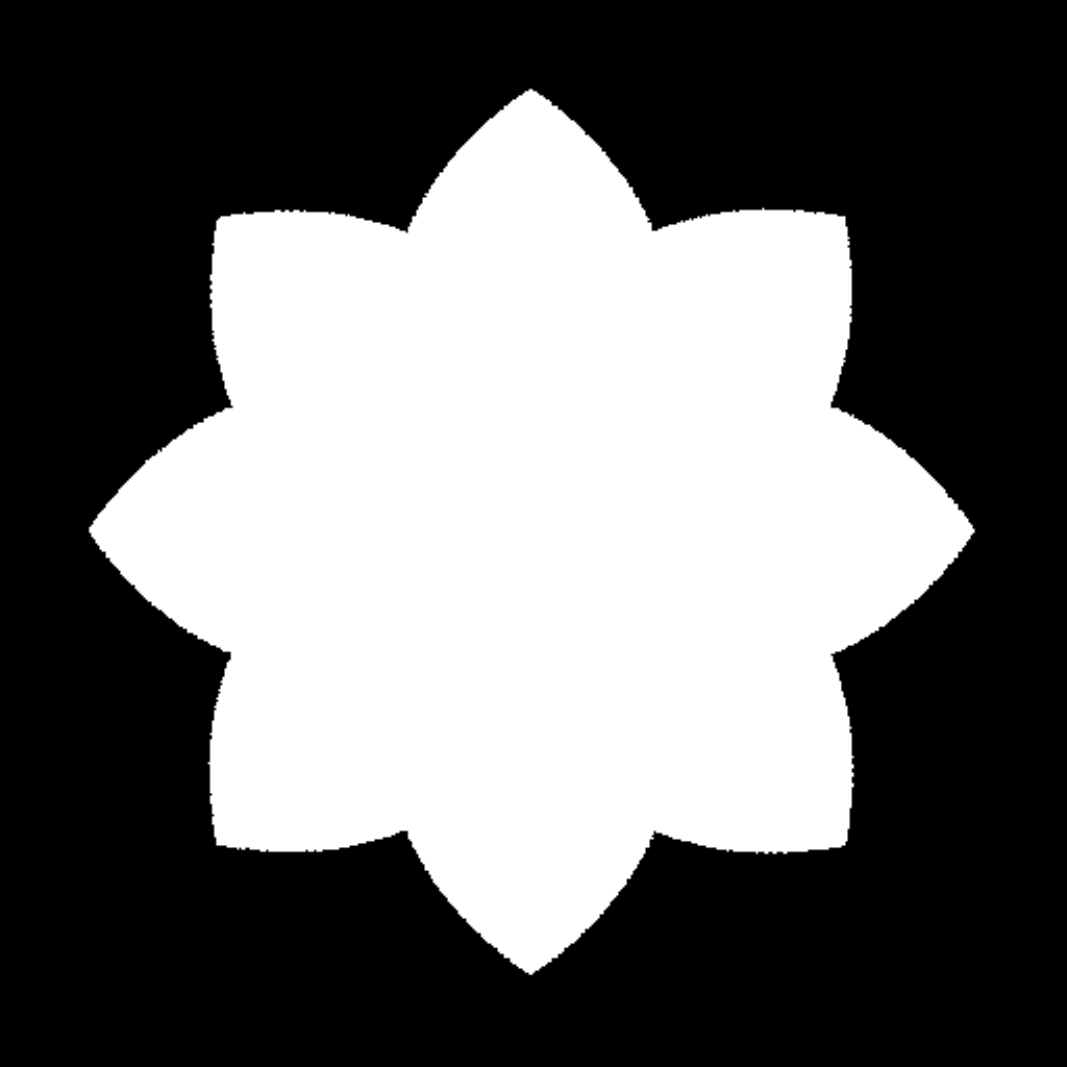}
\end{minipage}
}
\subfigure[device3-11]{
\begin{minipage}[htpb]{0.15\linewidth}
\centering
\includegraphics[width=0.55in]{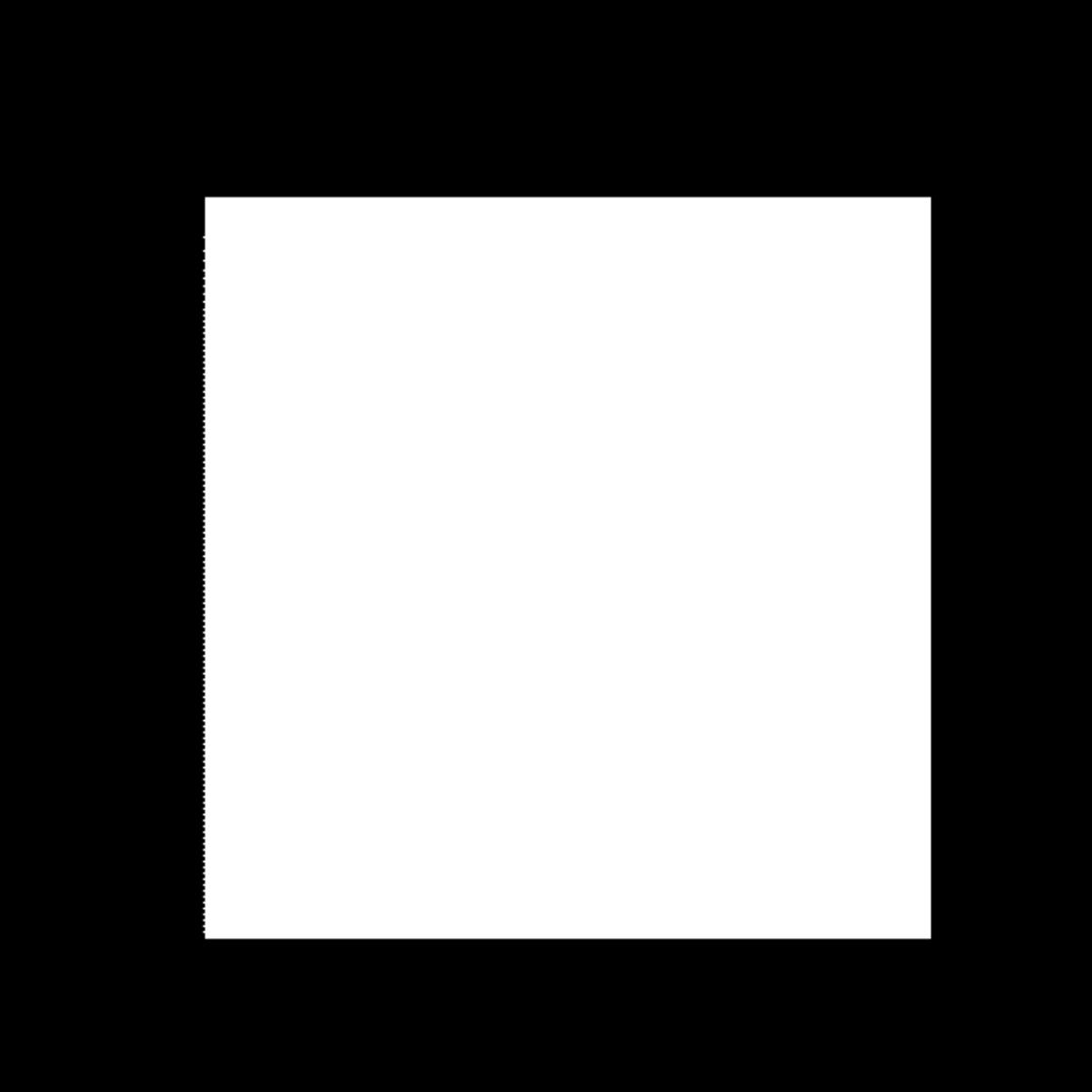}
\end{minipage}
}
\subfigure[device4-20]{
\begin{minipage}[htpb]{0.15\linewidth}
\centering
\includegraphics[width=0.55in]{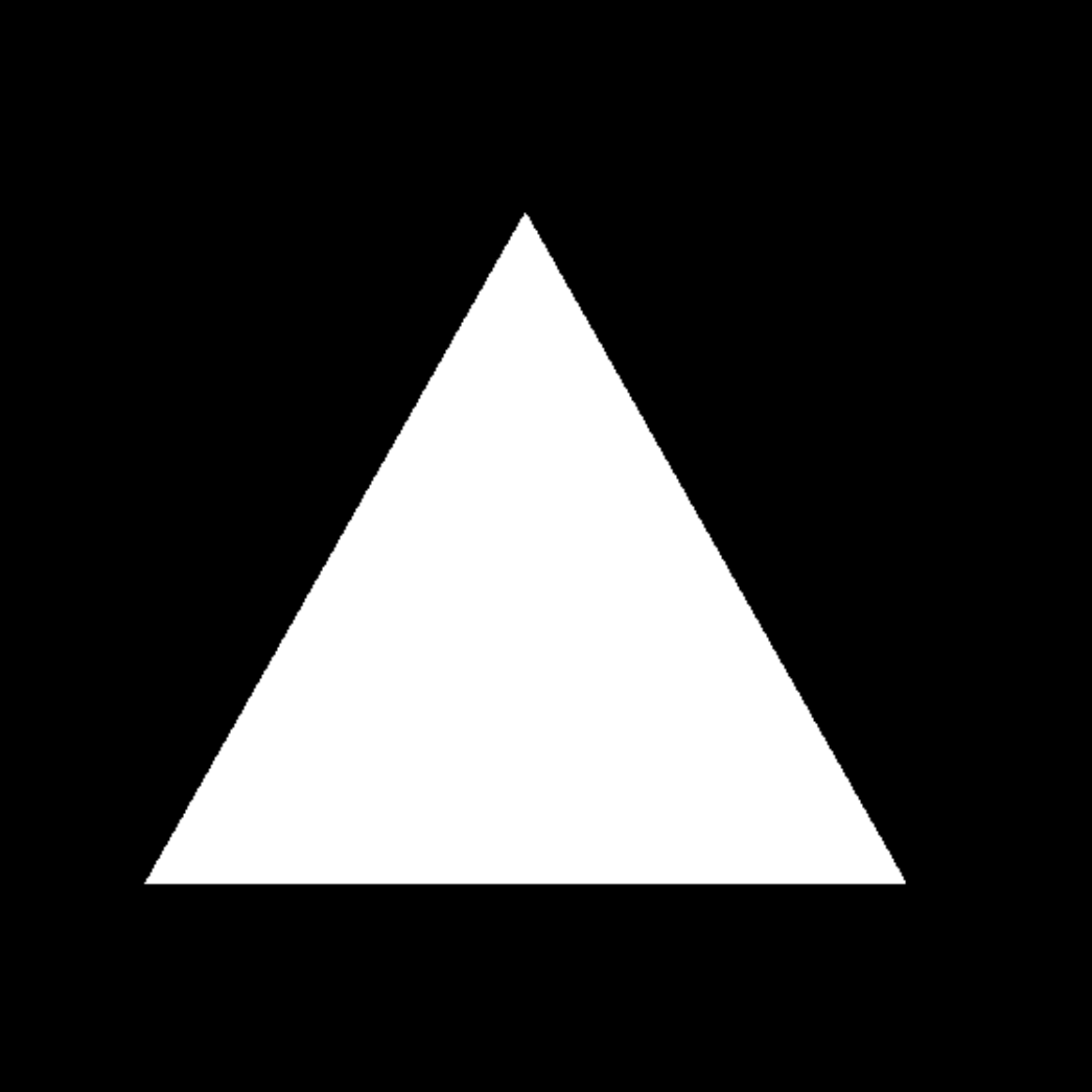}
\end{minipage}
}

\subfigure[device5-20]{
\begin{minipage}[htpb]{0.15\linewidth}
\centering
\includegraphics[width=0.55in]{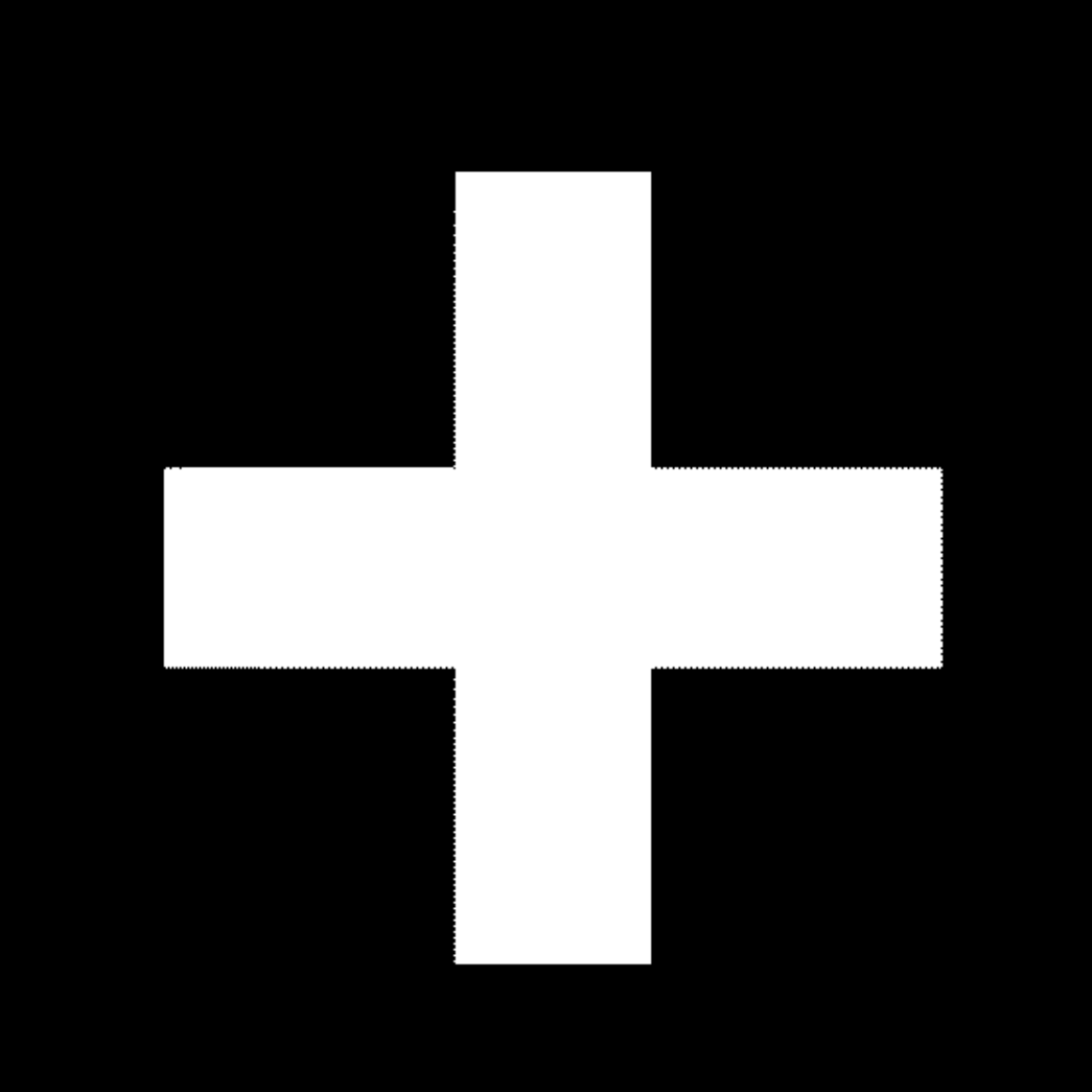}
\end{minipage}
}
\subfigure[device6-20]{
\begin{minipage}[htpb]{0.15\linewidth}
\centering
\includegraphics[width=0.55in]{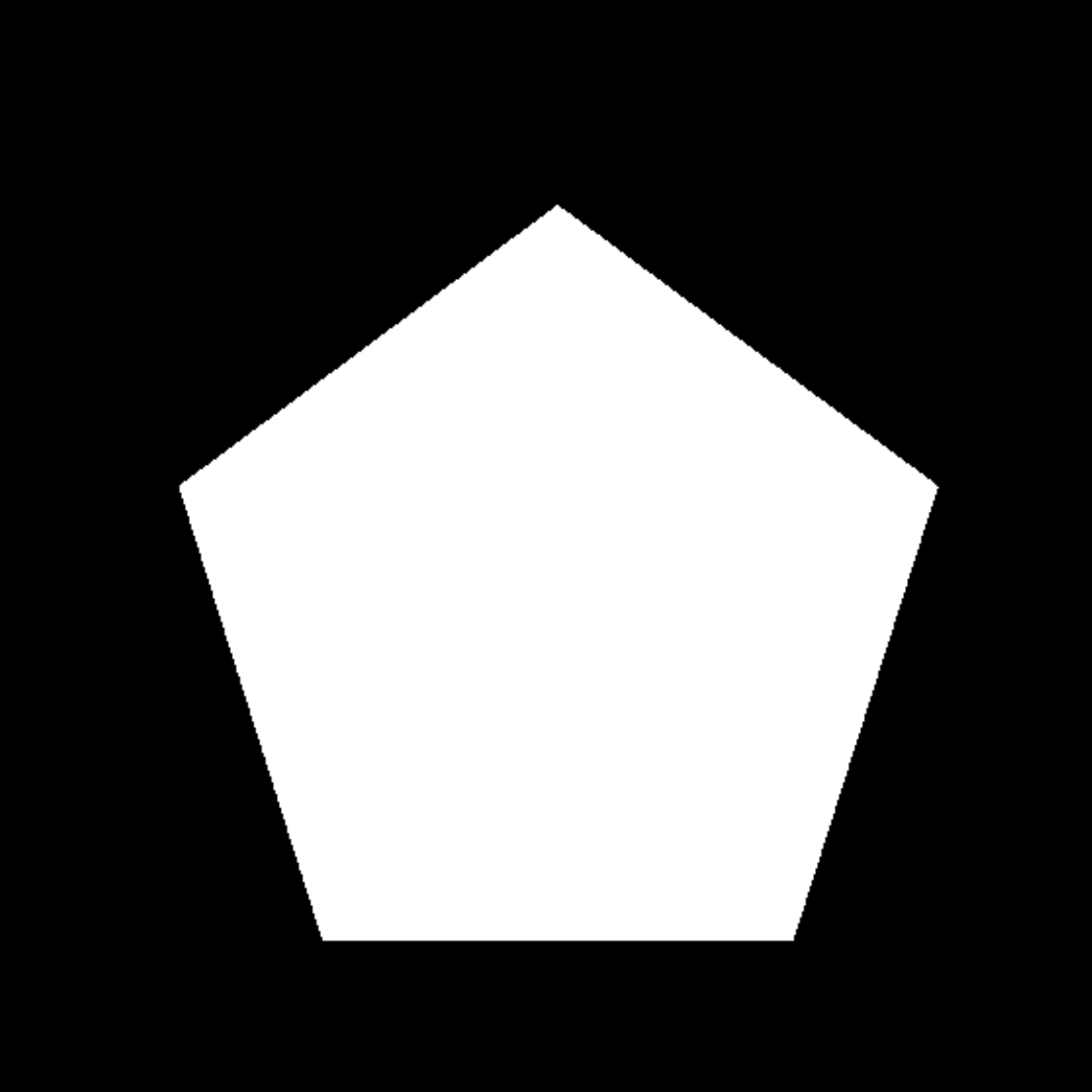}
\end{minipage}
}
\subfigure[device7-19]{
\begin{minipage}[htpb]{0.15\linewidth}
\centering
\includegraphics[width=0.55in]{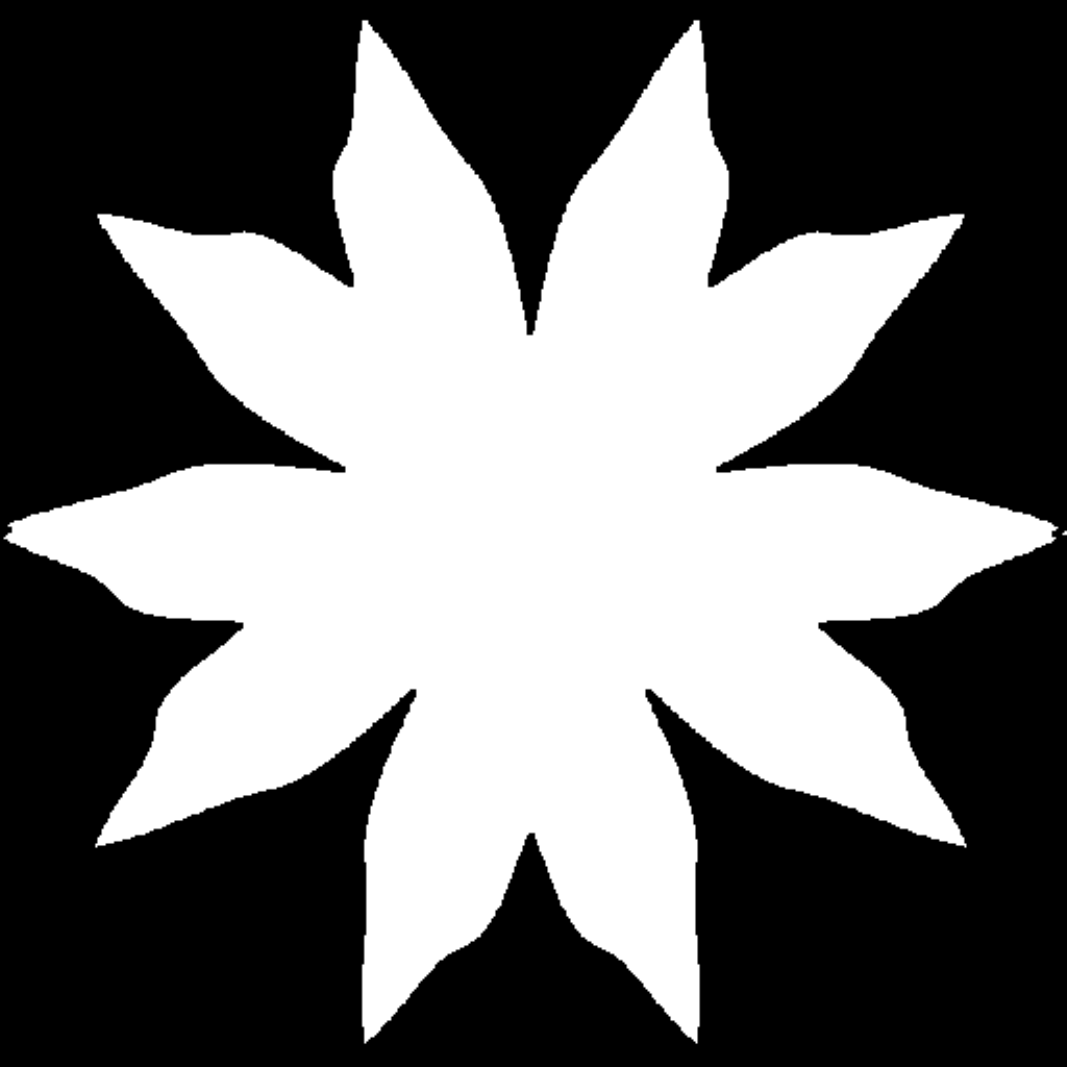}
\end{minipage}
}
\subfigure[device8-20]{
\begin{minipage}[htpb]{0.15\linewidth}
\centering
\includegraphics[width=0.55in]{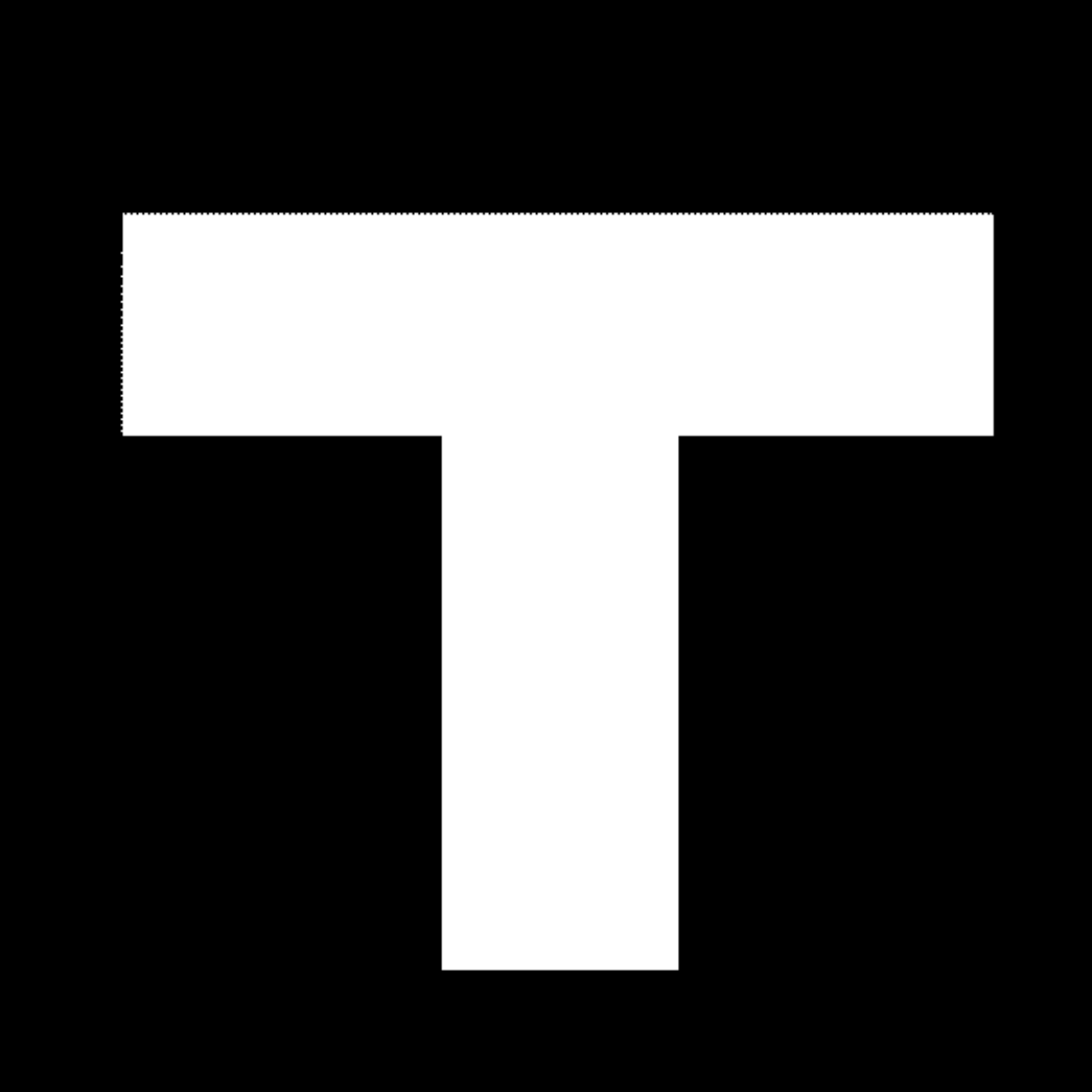}
\end{minipage}
}
\subfigure[device9-20]{
\begin{minipage}[htpb]{0.15\linewidth}
\centering
\includegraphics[width=0.55in]{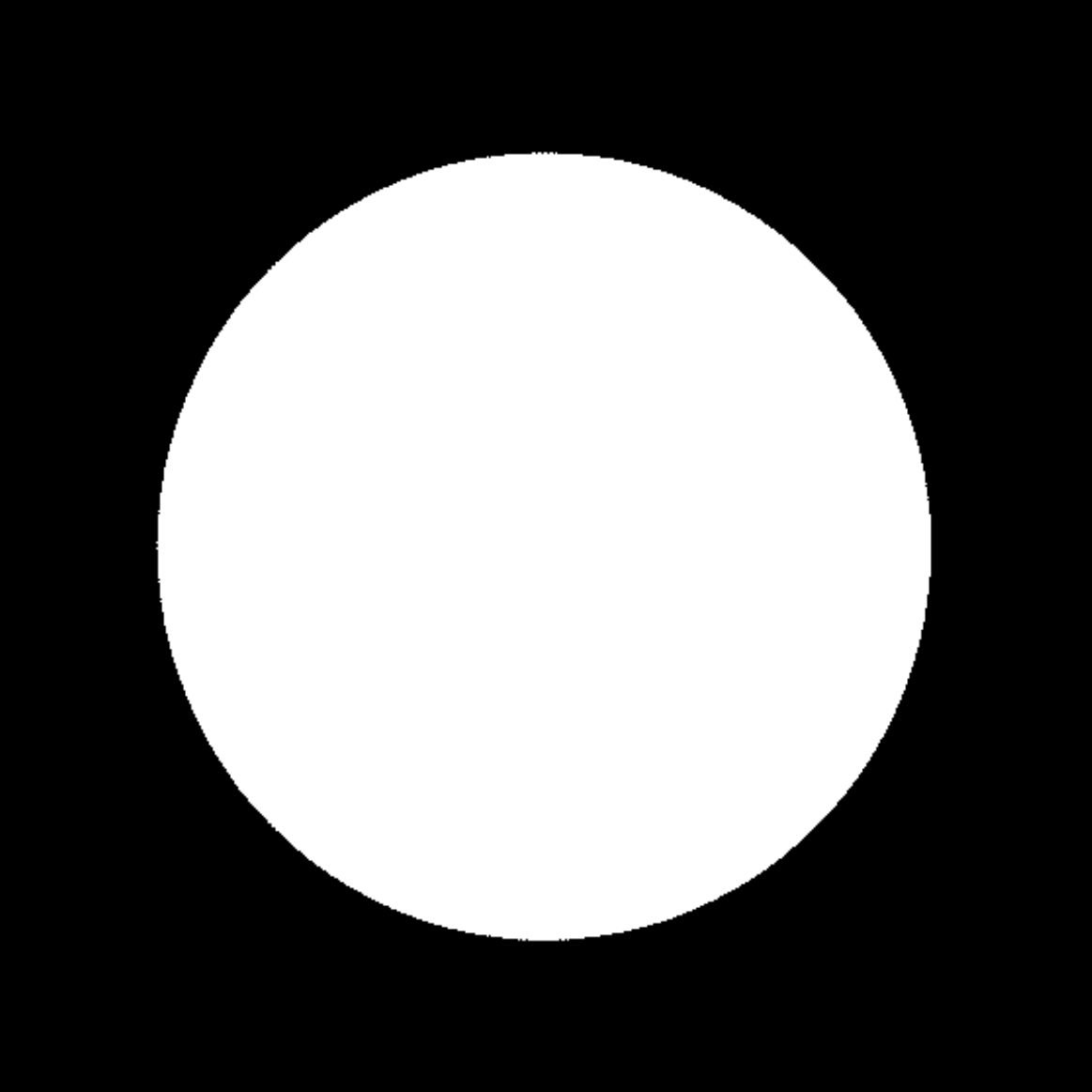}
\end{minipage}
}
\caption{The signal shapes in this paper. These shapes are all black-and-white pictures and are transformed into matrices with elements 0 and 1, where the white parts are represented by 0 and the black parts are 1. These shapes are all $64\times 64$ matrices.}
\end{figure}
\subsection{Signal Shape Data Sets}
This section evaluates the effect of the safe subspace screening rule in Algorithm 2 on some simulation data sets in Figure 1, which was analyzed in \cite{Z14}. Same as this reference, we select some clear signal shapes in Figure as the objective and set them as true solutions $B$, and simulate the prediction matrix $\{X_{i}\}_{i=1}^{n}$ with their elements distributing as the standard normal distribution. The noise term $\{\epsilon_{i}\}_{i=1}^{n}$ distribute as the Gaussian distribution with mean 0 and standard variance 0.1. Following $y_{i}=\langle X_{i},B\rangle+\epsilon_{i}$, response variables can be obtained. In these data sets, we set $K=10$ as in \cite{Z14}.

There results are presented  in Table 2 and Table 3. From these numerical results, we can conclude the following claims. 1) The proposed screening rule surely accelerate the computation of the model. One can see that the speedup values vary with different data and different sample size $n$, meaning the effect of our safe subspace screening rule being different under different data sets. 2) In Table 2 and Table 3, most of $T_{f}$ values are larger or great larger than $T_{s}$, which leads to the speedup values being larger than 1 and can at most being 7.987. The result of device6-20 in Table 3 is special, that is its $T_{f}$ is smaller than $T_{s}$, which leads to its speedup being slightly smaller than 1. So, our screening rule actually speeds up the computation of the  adaptive nuclear norm regularized trace regression in most data sets. 3) With fixed $p$ and $q$, the speedup values decrease with the sample size increasing, which means our screening rule is more efficient on data sets with small size sample number. One can see that the speedup values in Table 2 are more larger than that in Table 3. For instance, the speedup value of device0-9 decreases from $7.987$ to $1.074$ when the sample size increase from 10 to 100.

\begin{table}[htbp]
\caption{Different speedup values of different signal shapes with $p=q=64$, sample size $n=10$ and $n=20$, respectively.}
\begin{tabular}{|l|ccc|ccc|}
\hline
\multicolumn{1}{|c|}{\multirow{2}{*}{Name}} & \multicolumn{3}{c|}{$n=10$}    & \multicolumn{3}{c|}{$n=20$}    \\\cline{2-7}
\multicolumn{1}{|c|}{}                      & $T_{f}$ & $T_{s}$ & speedup & $T_{f}$ & $T_{s}$ & speedup \\\hline
device0-9                                 & 217.017 & 27.173  & 7.987   & 168.720 & 24.435  & 6.905   \\
device1-5                                 & 219.505 & 40.066  & 5.479   & 152.796 & 28.163  & 5.425   \\
device2-1                                 & 195.959 & 39.194  & 4.949   & 125.379 & 28.792  & 4.355   \\
device3-11                                & 194.986 & 35.023  & 5.567   & 169.639 & 29.041  & 5.841   \\
device4-20                                & 245.518 & 34.375  & 7.142   & 135.808 & 19.493  & 6.967   \\
device5-20                                & 163.850 & 30.389  & 5.392   & 118.962 & 22.972  & 5.179   \\
device6-20                                & 215.899 & 53.224  & 4.056   & 178.745 & 25.811  & 6.925   \\
device7-19                                & 218.524 & 59.622  & 3.665   & 125.296 & 35.455  & 3.534   \\
device8-20                                & 254.065 & 38.345  & 6.626   & 125.379 & 28.792  & 4.355   \\
device9-20                                & 192.724 & 37.218  & 5.178   & 118.913 & 24.672  & 4.820
\\ \hline
\end{tabular}
\end{table}
\begin{table}[htbp]
\caption{Different speedup values of different signal shapes with $p=q=64$, sample size $n=50$ and $n=100$, respectively.}
\begin{tabular}{|l|ccc|ccc|}
\hline
\multicolumn{1}{|c|}{\multirow{2}{*}{Name}} & \multicolumn{3}{c|}{$n=50$}    & \multicolumn{3}{c|}{$n=100$}    \\\cline{2-7}
\multicolumn{1}{|c|}{}                      & $T_{f}$ & $T_{s}$ & speedup & $T_{f}$ & $T_{s}$ & speedup \\\hline
device0-9                                 & 140.902 & 42.386 & 3.324   & 68.403 & 63.675 & 1.074   \\
device1-5                                 & 140.997 & 57.596 & 2.448   & 115.800 & 73.381  & 1.578   \\
device2-1                                 & 149.175 & 55.251 & 2.700   & 105.492 & 86.638  & 1.217   \\
device3-11                                & 159.333 & 63.004 & 2.481   & 104.908 & 88.037  & 1.192  \\
device4-20                                & 90.780 & 42.404 & 2.141   & 83.027 & 70.038  & 1.185   \\
device5-20                                & 88.013 & 46.305 & 1.900   & 102.469 & 74.530  & 1.374   \\
device6-20                                & 86.911 & 50.902 & 1.707   & 69.052 & 71.372  & 0.967   \\
device7-19                                & 139.174 & 59.124 & 2.354   & 98.926 & 83.064  & 1.191   \\
device8-20                                & 130.383 & 44.233 & 2.948   & 82.054 & 75.616  & 1.085   \\
device9-20                                & 126.202 & 57.096 & 2.123    & 87.828 & 77.316 & 1.136
\\ \hline
\end{tabular}
\end{table}
\subsection{COVID-19 Data Set}

This section applies the safe screening rule to the COVID-19 data set. The COVID-19 dataset (\cite{M21, W20})  consists of daily measurements related to COVID-19 for 138 countries around the world. This data set records the newly confirmed case in the period June 13, 2020 to July 12, 2020. In addition, this data also includes the 41 COVID-19 related government policies in each day, i.e.,  school-closing, restrictions on gathering, stay-at-home requirement, income support and so on. Each of these policies may have several levels, for example, school closing includes
no closing, recommend closing, require some closing (e.g. just high school) or require all closing, which varied during the 30-day period. Therefore, for every sample, its prediction matrix is  $X\in\mathbb{R}^{41 \times 30}$  and response $y\in\mathbb{R}$. The sample size is $n = 138$.
\begin{table}[htbp]
\caption{COVID-19}
\begin{tabular}{|c|c|c|c|}
\hline
data set & $T_{f}$ & $T_{s}$ & speedup \\
\hline
COVID-19 & 63.177  & 52.311  & 1.208\\
\hline
\end{tabular}
\end{table}

Same as the last section, we record the $T_{f}$, $T_{s}$ and speedup values. To cover more tuning parameters, we set $K=50$ and $\lambda_{m}=0.618^{k}\lambda_{\max}$, $k=1,2,\cdots,50$. From the result in Table 4, we know that the Algorithm 2 reduce the computation time from 63.177 to 52.311, and make a 1.208 speedup value.
\section{Conclusion}
In this paper, we build up the safe subspace screening rule for the adaptive nuclear norm regularized trace regression. The solution of this model is decomposed to the sum of rank one matrixes. With this decomposition and optimal condition of the model, the safe subspace screening rule is proposed. This rule identifies inactive subspace of the solution decomposition and reduce the solution dimension. Numerical results show that our screening rule can reduce the computational time of the model. However, this result fits the data sets with smaller sample size. How to accelerate the computation of this model with larger sample size is a further consideration.
\section*{Acknowledgements}
This work was supported by the National Natural Science Foundation of China (12371322).




\end{document}